\documentclass[12pt]{llncs}

\usepackage{amssymb,amsmath,stmaryrd}
\usepackage{pstricks, pst-node, pst-tree, pst-plot}
\usepackage{graphicx}
\usepackage{fullpage}
\usepackage{color}
\usepackage{times}

\def\N{{\mathbb{N}}}

\def\RP{{\mathbb{R^+}}}
\def\R{{\mathbb{R}}}
\def\opt{{\mathrm{O}\textsc{pt}}}

\def\tp{{\mathrm{tp}}}
\def\poa{\mathrm{PoA}}
\def\pos{\mathrm{PoS}}

\def\pay{u}
\def\P{\mathcal{P}}
\def\O{\mathcal{O}}
\def\ne{{\mathcal{NE}}}

\def\calG{{\mathcal G}}

\sloppy

\begin{document}

\title{Dynamics of Profit-Sharing Games}

\author{John Augustine, Ning Chen, Edith Elkind, Angelo Fanelli, Nick Gravin, Dmitry Shiryaev}
\institute{
Division of Mathematical Sciences \\
School of Physical and Mathematical Sciences\\ 
Nanyang Technological University, Singapore}

\maketitle

\begin{abstract}
An important task in the analysis of multiagent systems
is to understand how groups of selfish
players can form coalitions, i.e., work together in teams. 
In this paper, we study the dynamics of coalition formation 
under bounded rationality. 
We consider settings where each team's profit is given by a convex function, 
and propose three profit-sharing
schemes, each of which is based on the concept of marginal utility. 
The agents are assumed to be myopic, i.e., they keep changing teams
as long as they can increase their payoff by doing so.
We study the properties (such as closeness to Nash equilibrium
or total profit) of the states that result
after a polynomial number of such moves, and
prove bounds on the price of anarchy and the price of stability
of the corresponding games. 
\end{abstract}

\section{Introduction}\label{sec:intro}
Cooperation and collaborative task execution are fundamentally important both
for human societies and for multiagent systems. Indeed, it is often the case 
that certain tasks are too complicated or resource-consuming to be executed
by a single agent, and a collective effort is needed. Such settings are usually
modeled using the framework of {\em cooperative games}, which specify the 
amount of payoff that each subset of agents can achieve: when the game is played
the agents split into teams (coalitions), and the payoff of each team is divided 
among its members. 

The standard framework of cooperative game theory is static, i.e., it does not
explain how the players arrive at a particular set of teams and a payoff 
distribution. However, understanding the dynamics of coalition formation 
is an obviously important issue from the practical perspective, 
and there is an active stream of research that studies
bargaining and coalition formation in cooperative 
games (see, e.g.~\cite{chatterjee93,moldovanu95,okada96,yan2003}).
Most of this research assumes that the agents are fully rational, i.e., 
can predict the consequences of their actions and maximize their (expected)
utility based on these predictions.
However, full rationality is a strong assumption that is unlikely to
hold in many real-life scenarios: first, the agents may not have the computational 
resources to infer their optimal strategies, and second, they may not be 
sophisticated enough to do so, or lack information about other players. 
Such agents may simply respond to their current environment
without worrying about the subsequent reaction of other agents; such behavior
is said to be {\em myopic}. Now, coalition formation by computationally limited
agents has been studied by a number of researchers in multi-agent systems,  
starting with the work of \cite{shehory-kraus} and \cite{sandholm-lesser}. 
However, myopic behavior in coalition formation received relatively
little attention in the literature (for some exceptions, see
\cite{dieckmann,chalk,airiau}). In contrast, myopic dynamics of non-cooperative
games is the subject of a growing body of research
(see, e.g.~\cite{FabrikantPT04,AwerbuchAEMS08,fanelli08}). 

In this paper, we merge these streams of research and apply techniques developed in the context
of analyzing the dynamics of non-cooperative games to coalition formation settings.
In doing so, we depart from the standard model of games with transferable utility, which
allows the players in a team to share the payoff arbitrarily: indeed, 
such flexibility will necessitate a complicated negotiation process
whenever a player wants to switch teams. Instead, we consider three 
payoff models that are based on the concept of marginal utility, i.e., 
the contribution that the player makes to his current team. Each of the payoff
schemes, when combined with a cooperative game, induces a non-cooperative
game, whose dynamics can then be studied using the rich set of tools
developed for such games in recent years.

We will now describe our payment schemes in more detail. We assume that
we are given a convex cooperative game, i.e., the values of the teams are 
given by a submodular function; the submodularity property 
means that a player is more useful when he joins a smaller team, and 
plays an important 
role in our analysis. In our first scheme, the payment to each agent is given by
his marginal utility for his current team; by submodularity, the total
payment to the team members never exceeds the team's value.
This payment scheme rewards
each agent according to the value he creates; we will therefore
call these games {\em Fair Value games}.
Our second scheme takes into account the history of the interaction: 
we keep track of the order
in which the players have joined their teams, and pay each agent his marginal contribution
to the coalition formed by the players who joined his current team before him. This ensures
that the entire payoff of each team is fully distributed among its members. Moreover, 
due to the submodularity property a player's payoff never goes down as long as he stays
with the same team. This payoff scheme is somewhat reminiscent of the reward
schemes employed in industries with strong labor unions; we will therefore refer
to these games as {\em Labor Union games}. Our third scheme can be viewed
as a hybrid of the first two: it distributes the team's payoff
according to the players' Shapley values,
i.e., it pays each player his expected marginal contribution
to a coalition formed by its predecessors when players are reordered randomly;
the resulting games are called {\em Shapley games}.

\smallskip

\noindent{\bf Our contributions\quad} We study the equilibria and dynamics of the
three games described above. We are interested in the properties of the states
that can be reached by natural dynamics in a polynomial number of steps: in 
particular, whether such states are (close to) Nash equilibria, and whether
they result in high total productivity, i.e., the sum of the teams' values
(note that in Fair Value games the latter quantity may differ from the social 
welfare, i.e., the sum of players' payoffs).

We first show that all our games are
potential games, and hence admit a Nash equilibium in pure strategies. 
We then argue that for each of our games the price of anarchy is bounded by 2.
For the first two classes of games, we can also bound their $\alpha$-price
of anarchy, i.e., the ratio between the total profit of the optimal coalition structure
and that of the worst $\alpha$-Nash equilibrium, by $2+\alpha$.  
We also provide bounds on the price of stability for all three games.
Further, for the first two 
classes of games, we show that the basic Nash dynamic converges in polynomial time
to an approximately optimal state, where the approximation ratio is arbitrarily close to the 
price of anarchy; 
these results extend to basic $\alpha$-Nash dynamic and $\alpha$-price of anarchy.
To obtain these results, we observe that both the Fair Value games and the
Labor Union games can be viewed as variants of $\beta$-nice games
introduced in~\cite{AwerbuchAEMS08}, and prove general convergence results for such games, 
which may be of independent interest.
We then show that Labor Union games have additional desirable properties:
in such games $\alpha$-Nash dynamics
quickly converges to $\alpha$-Nash equilibrium; also, if we start with the state where
each player is unaffiliated, the Nash dynamics converges to a Nash equilibrium 
after each player gets a chance to move.

The rest of the paper is organized as follows. After a brief overview of the related
work, we provide the required preliminaries in Section~\ref{sec:prelim}. 
Section~\ref{sec:beta} deals with $\beta$-nice games and lays the groundwork  
that will be necessary for the technical results in the next section. Then, 
in Section~\ref{sec:main}, we describe our three classes of games and
present our results for these games. Section~\ref{sec:cut} explains the 
relationship between our games and the well-studied cut games.
Section~\ref{sec:concl} presents our conclusions and directions for future work.

\smallskip

\noindent{\bf Related Work\quad} 
The games studied in this paper belong to the class of {\em potential games}, 
introduced by Monderer and Shapley~\cite{MondererShapley}. 
In potential games,
any sequence of improvements by players converges to  a pure Nash equilibrium. 
However, the number of steps can be exponential in the description of the game. 
The complexity of computing (approximate) 
Nash equilibrium in various subclasses of potential games
such as congestion games~\cite{R73}, cut games~\cite{Yann91} or party affiliation 
games~\cite{FabrikantPT04} has received a lot of attention in recent 
years~\cite{Johnson88,FabrikantPT04,ChristodoulouMS06,SkopalikV08,Tob10,Bhalgat10}.
A related issue is how long it takes for some form of best
response dynamics to reach an
equilibrium~\cite{mir04,goemans05,chien07,ack08,SkopalikV08,AwerbuchAEMS08}.
Even if a Nash equilibrium cannot be reached quickly, 
a state reached after a polynomial number of steps may still have high social welfare;
this question is studied, for example, in~\cite{ChristodoulouMS06,fanelli08,fanelli09}.

A recent paper by Gairing and Savani~\cite{gairing10} studies the dynamics
of a class of cooperative games known as additively separable hedonic games;
their focus is on the complexity of computing stable outcomes. While the class
of all convex cooperative games considered in this paper is considerably
broader than that of additively separable games, paper~\cite{gairing10}
also studies notions of stability not considered here.

\section{Preliminaries}\label{sec:prelim}
\noindent{\bf Non-cooperative games.\quad} A \emph{non-cooperative
game} is defined by a tuple ${\cal G} =\left(N, (\Sigma_i)_{i \in
N}, (\pay_i)_{i \in N}\right)$, where $N=\{1,2,\ldots,n\}$ is the
set of {\em players}, $\Sigma_i$ is the set of (pure) {\em
strategies} of player $i$, and $\pay_i : \times_{i \in N}\Sigma_i
\to \RP \cup \{0\}$ is the {\em payoff function} of player $i$.

Let $\Sigma = \times_{i \in N} \Sigma_i$ be the \emph{strategy
profile set} or \emph{state set} of the game, and let $S = (s_1,
s_2,\ldots, s_n) \in \Sigma$ be a generic state in which each player
$i$ chooses strategy $s_i \in \Sigma_i$. Given a strategy profile
$S = (s_1, s_2,\ldots, s_n)$ and a strategy $s'_i \in \Sigma_i$,
let $(S_{-i}, s'_i)$ be the strategy profile obtained from $S$ 
by changing the strategy of player $i$ from $s_i$ to $s'_i$, i.e., 
$(S_{-i}, s'_i) = (s_1, s_2,\ldots, s_{i-1}, s_{i}', s_{i+1}, \ldots,s_n)$.

\smallskip

\noindent{\bf Nash equilibria and dynamics. \quad} 
Given a strategy profile $S =
(s_1, s_2, \ldots, s_n)$, a strategy $s'_i\in\Sigma_i$ is an
\emph{improvement move} for player $i$ if $\pay_i(S_{-i}, s'_i) >
\pay_i(S)$; further, $s'_i$ is called an {\em $\alpha$-improvement
move} for $i$ if $\pay_i(S_{-i}, s'_i) > (1+\alpha)\pay_i(S)$, where
$\alpha>0$. A strategy $s^b_i \in \Sigma_i$ is a \emph{best
response} for player $i$ in state $S$ if it yields the maximum
possible payoff given the strategy choices of the other players,
i.e., $\pay_i(S_{-i}, s^b_i) \geq \pay_i(S_{-i}, s'_i)$ for any
$s'_i \in \Sigma_i$. 
An {\em $\alpha$-best
response move} is both an $\alpha$-improvement and a best response
move.

A (pure) \emph{Nash equilibrium} is a strategy profile in which
every player plays her best response. Formally, $S = (s_1, s_2,
\ldots, s_n)$ is a Nash equilibrium if for all $i \in N$ and for any
strategy $s'_i \in \Sigma_i$ we have $\pay_i(S) \geq \pay_i(S_{-i}, s'_i)$.
We denote the set of all (pure) Nash equilibria of a game ${\cal G}$
by $\ne({\cal G})$. 
A profile $S=(s_1, \dots, s_n)$
is called an {\em $\alpha$-Nash equilibrium} if 
no player can improve his payoff by more than a factor
of $(1+\alpha)$ by deviating, i.e., 
$(1+\alpha)u_i(S)\ge u_i(S_{-i}, s'_i)$
for any $i\in N$ and any $u'_i\in \Sigma_i$.
The set of all  $\alpha$-Nash equilibria of $\calG$
is denoted by $\ne^\alpha({\cal G})$.
In a {\em strong Nash equilibrium}, no group of players can improve
their payoffs by deviating, i.e., $S=(s_1, \dots, s_n)$ is 
a strong Nash equilibrium if for all $I\subseteq N$ and
any strategy vector $S'=(s'_1, \dots, s'_n)$ such that 
$s'_i=s_i$ for $i\in N\setminus I$, if
$\pay_i(S')> \pay_i(S)$ for some $i\in I$, then
$\pay_j(S')< \pay_j(S)$ for some $j\in I$.
%
%

Let $\Delta_i(S)$ be the improvement in the player's payoff if he performs
his best response, i.e., $\Delta_i(S) = \pay_i(S_{-i}, s^b_i) - \pay_i(S)$,
where $s^b_i$ is the best response of player $i$ in state $S$.
For any $Z \subseteq N$ let  $\Delta_{Z}(S) = \sum_{i \in Z} \Delta_i (S)$,
and let $\Delta(S) = \Delta_N (S)$.
A \emph{Nash dynamic} (respectively, \emph{$\alpha$-Nash dynamic})
is any sequence of best response (respectively, $\alpha$-best response) moves.
A \emph{basic Nash dynamic} (respectively, \emph{basic $\alpha$-Nash dynamic})
is any Nash dynamic (respectively, $\alpha$-Nash dynamic)
such that at each state $S$
the player $i$ that makes a move has the maximum absolute improvement, i.e.,
$i\in \arg\max_{j \in N} \Delta_j(S)$.

\smallskip

\noindent{\bf Price of anarchy.\quad}Given a game $\calG$ with a set
of states $\Sigma$, and a function $f:\Sigma\to\RP\cup\{0\}$, we
write $\opt_f(\calG)=\max_{S\in\Sigma}f(S)$.
The \emph{price of anarchy} $\poa_f(\calG)$ and the \emph{price of stability}
$\pos_f(\calG)$ of a game $\cal G$ with respect to a function $f$ are, 
respectively, the worst-case ratio and the best-case ratio
between the value of $f$ in a Nash equilibrium and $\opt_f(\calG)$, i.e.,
$\poa_f({\cal G})=\max_{S\in\ne(\calG)}\frac{\opt_f(\calG)}{f(S)}$,
$\pos_f({\cal G})=\min_{S\in\ne(\calG)}\frac{\opt_f(\calG)}{f(S)}$.
The {\em strong price of anarchy} and the {\em strong price of stability}
are defined similarly; the only difference is that the maximum 
(respectively, minimum) is taken over all strong Nash equilibria.
Further, the {\em $\alpha$-price of anarchy}  $\poa^\alpha_f(\calG)$ 
of a game $\calG$ with respect to $f$ is defined as 
$\poa^\alpha_f(\calG)=\max_{S\in\ne^\alpha(\calG)}\frac{\opt_f(\calG)}{f(S)}$;
the {\em $\alpha$-price of stability} $\pos^\alpha_f(\calG)$ 
can be defined similarly.
Originally, these notions were defined with respect to the social welfare function, 
i.e., $f=\sum_{i \in N} u_i(S)$. 
However, we give a more general definition
since in the setting of this paper it is natural
to use a different function $f$. We omit the index $f$ when the function $f$
is clear from the context.

\smallskip

\noindent{\bf Potential games.\quad} A non-cooperative game $\calG$
is called a \emph{potential game} if there is a function $\Phi :
\Sigma\to\N$ such that for any state $S$ and any improvement move
$s'_i$ of a player $i$ in $S$ we have  $\Phi(S_{-i}, s'_i) - \Phi(S)
> 0$; the function $\Phi$ is called the \emph{potential function} of
$\calG$. The game $\calG$ is called an \emph{exact potential game}
if we have $\Phi(S_{-i}, s'_i) - \Phi(S) = \pay_i(S_{-i}, s'_i) -
\pay_i(S)$. It is known that any potential game has a pure Nash
equilibrium~\cite{MondererShapley,R73}.

\smallskip

\noindent{\bf Cooperative games.\quad} A {\em cooperative} game
$G=(N, v)$ is given by a set of {\em players} $N$ and a {\em
characteristic function} $v:2^N\to\RP\cup\{0\}$ that for each set
$I\subseteq N$ specifies the profit that the players in $I$ can
earn by working together. We assume that $v(\emptyset)=0$. A {\em
coalition structure} over $N$ is a partition of players in $N$,
i.e., a collection of sets $I_1, \dots, I_k$ such that (i)
$I_i\subseteq N$ for $i=1, \dots, k$; (ii) $I_i\cap I_j=\emptyset$
for all $i< j\le k$; and (iii) $\cup_{j=1}^k I_j=N$. A game $G=(N,
v)$ is called {\em monotone} if $v$ is {\em non-decreasing}, i.e.,
$v(I)\le v(J)$ for any $I\subset J\subseteq N$. Further, $G$ is
called {\em convex} if $v$ is {\em submodular}, i.e., for any
$I\subset J\subseteq N$ and any $i\in N\setminus J$ we have
$v(I\cup\{i\})-v(I)\ge  v(J\cup\{i\})-v(J)$. Informally, in a convex
game a player is more useful when he joins a smaller coalition. 
We will make use of the
following property of submodular functions.

\begin{lemma}\label{lem_sub}
Let  $f: 2^V \to \R$ be a submodular function. Then for any pair of sets $X, Y\subseteq V$ 
such that $X \cap Y = \emptyset$ and $X=\{x_1, x_2,\ldots, x_k\}$, it holds that
$\sum_{j=1, \ldots, k} \left(f(Y \cup \{x_j\}) - f(Y) \right)\geq 
f(Y \cup X) - f(Y)$. 
\end{lemma}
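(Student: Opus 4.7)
The plan is to prove the inequality by a telescoping decomposition of the right-hand side, combined with a termwise application of submodularity.

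First, I will expand $f(Y \cup X) - f(Y)$ as a telescoping sum along an arbitrary enumeration $x_1, \ldots, x_k$ of $X$. Setting $Y_0 = Y$ and $Y_j = Y \cup \{x_1, \ldots, x_j\}$ for $j = 1, \ldots, k$, one has
\[
f(Y \cup X) - f(Y) \;=\; \sum_{j=1}^{k} \bigl( f(Y_j) - f(Y_{j-1}) \bigr) \;=\; \sum_{j=1}^{k} \bigl( f(Y_{j-1} \cup \{x_j\}) - f(Y_{j-1}) \bigr).
\]

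Next, I will bound each term using submodularity. Since $X \cap Y = \emptyset$, for every $j$ the element $x_j$ lies outside $Y_{j-1}$, and we have the chain $Y \subseteq Y_{j-1}$. Applying the submodularity inequality from the definition (with $I = Y$, $J = Y_{j-1}$, $i = x_j$) gives
\[
f(Y \cup \{x_j\}) - f(Y) \;\geq\; f(Y_{j-1} \cup \{x_j\}) - f(Y_{j-1}).
\]
Summing this inequality from $j = 1$ to $k$ and combining with the telescoping identity above yields exactly $\sum_{j=1}^{k}\bigl(f(Y \cup \{x_j\}) - f(Y)\bigr) \geq f(Y \cup X) - f(Y)$, as required.

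There is essentially no obstacle here: the only thing to be careful about is that the submodularity condition stated in the preliminaries compares $v(I \cup \{i\}) - v(I)$ with $v(J \cup \{i\}) - v(J)$ for $I \subset J$ and $i \notin J$, so I must verify the hypotheses $Y \subseteq Y_{j-1}$ and $x_j \notin Y_{j-1}$ before applying it at each step; both follow immediately from $X \cap Y = \emptyset$ and the fact that the $x_j$ are distinct. The argument does not rely on monotonicity or nonnegativity of $f$, so it works for any submodular $f : 2^V \to \R$.
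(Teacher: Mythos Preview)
Your proof is correct and follows essentially the same approach as the paper: apply submodularity to compare $f(Y\cup\{x_j\})-f(Y)$ with the incremental gain $f(Y\cup\{x_1,\ldots,x_j\})-f(Y\cup\{x_1,\ldots,x_{j-1}\})$, then sum and telescope. Your version is slightly more explicit about the telescoping identity and about verifying the hypotheses $Y\subseteq Y_{j-1}$ and $x_j\notin Y_{j-1}$, but the argument is the same.
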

\begin{proof}
Since $f$ is a submodular function, for every $x_j \in X$ we have
$$
f(Y\cup \{x_j\}) - f(Y)  \geq f(Y \cup \{x_1, x_2, \ldots, x_{j-1}, x_j\}) - f(Y \cup \{x_1,
x_2, \ldots, x_{j-1}\}). 
$$
The lemma now follows by summing these inequalities for all $j=1,\ldots, k$. 
\qed
\end{proof}

\section{Perfect $\beta$-nice Games}\label{sec:beta}
In this section, we define the class of perfect $\beta$-nice games
(our definition is inspired by~\cite{AwerbuchAEMS08}, but differs
from the one given there), 
and prove a number of results for such games. Subsequently, we will show
that many of the profit-sharing games considered in the paper belong 
to this class. Most proofs in this section are relegated to 
Appendix~\ref{app:beta}.
\begin{definition}\label{def:beta-nice}
A potential game $\calG$ with a potential function $\Phi$ is called {\em perfect}
with respect to a function $f:\Sigma\to\RP\cup\{0\}$ if for any 
state $S$ it holds that $f(S) \geq \sum_{i\in N}u_i(S)$, and, moreover,  
%
%
for any improvement move $s'_i$ of player $i$ we have 
$$
f(S_{-i}, s'_i) - f(S)  \geq 
\Phi(S_{-i}, s'_i) - \Phi(S) \geq 
\pay_i(S_{-i}, s'_i) - \pay_i(S).
$$ 
Also, a game $\calG$ is called {\em $\beta$-nice} with respect to $f$ 
if for every state $S$ we have
$\beta \cdot f(S)  +  \Delta(S)  \geq  \opt_f(\calG)$.
\end{definition}
We can bound the price of anarchy of a $\beta$-nice game by $\beta$.

\begin{lemma}\label{lem_poa}
For any $f:\Sigma\to\RP\cup\{0\}$ and
any game $\calG$ that is $\beta$-nice w.r.t. $f$ we have $\poa_f(\calG) \le \beta$.
\end{lemma}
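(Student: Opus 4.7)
The plan is to argue directly from the definitions. Let $S^*$ be an arbitrary Nash equilibrium of $\calG$, so by definition no player has an improvement move from $S^*$. I will first observe that this forces the total best-response gain $\Delta(S^*)$ to vanish, and then combine this with the $\beta$-niceness inequality to control $f(S^*)$ from below.

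More concretely, for each player $i$, the quantity $\Delta_i(S^*) = \pay_i(S^*_{-i}, s^b_i) - \pay_i(S^*)$ is non-negative because $s_i$ itself is among the possible responses, so switching to $s_i$ cannot make $\pay_i$ strictly worse than itself, and the best response only does better. At a Nash equilibrium, however, the converse inequality $\pay_i(S^*_{-i}, s'_i) \le \pay_i(S^*)$ holds for every $s'_i \in \Sigma_i$, including $s'_i = s^b_i$. Hence $\Delta_i(S^*) = 0$ for every $i \in N$, and consequently $\Delta(S^*) = \sum_{i\in N} \Delta_i(S^*) = 0$.

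Now I apply the $\beta$-niceness inequality at $S = S^*$, which reads $\beta \cdot f(S^*) + \Delta(S^*) \ge \opt_f(\calG)$. Substituting $\Delta(S^*) = 0$ and rearranging, we get $f(S^*) \ge \opt_f(\calG)/\beta$, i.e.\ $\opt_f(\calG)/f(S^*) \le \beta$. Since $S^*$ was an arbitrary Nash equilibrium, taking the maximum over $\ne(\calG)$ yields $\poa_f(\calG) \le \beta$, as claimed.

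There is no serious obstacle here: the work is entirely in unpacking the definitions and in the elementary observation that $\Delta(S)$ vanishes on Nash equilibria. The only point that needs a brief justification is why $\Delta_i(S^*) = 0$ rather than just $\Delta_i(S^*) \le 0$, which I handle above via the remark that the current strategy $s_i$ is always a feasible response.
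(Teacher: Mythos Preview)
Your proof is correct and follows the same approach as the paper: observe that $\Delta(S)$ vanishes (the paper simply notes $\Delta(S)\le 0$) at any Nash equilibrium and substitute into the $\beta$-niceness inequality. Your version just spells out slightly more detail than the one-line proof in the paper.
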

\begin{proof}
The lemma follows by observing that for any Nash equilibrium $S$ we have 
$\Delta(S) \leq 0$.
\qed
\end{proof}
Lemma~\ref{lem_poa} can be extended to $\alpha$-price of anarchy
for any $\alpha\ge 0$.

\begin{lemma}\label{lem_poa_alpha}
For any $f:\Sigma\to\RP\cup\{0\}$, any $\alpha\ge 0$, and
any game $\calG$ that is $\beta$-nice w.r.t. $f$ we have $\poa^\alpha_f(\calG) 
\le \alpha+\beta$.
\end{lemma}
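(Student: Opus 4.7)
The plan is to mimic Lemma~\ref{lem_poa} but, instead of the identity $\Delta(S) \le 0$ that was available at an exact Nash equilibrium, to track the residual improvement $\Delta(S)$ that remains available at an $\alpha$-Nash equilibrium and bound it in terms of $f(S)$. Fix an arbitrary $S \in \ne^\alpha(\calG)$. For each player $i$, applying the multiplicative $\alpha$-Nash condition $(1+\alpha)\pay_i(S) \ge \pay_i(S_{-i}, s'_i)$ to her best response $s_i^b$ gives $(1+\alpha)\pay_i(S) \ge \pay_i(S_{-i}, s_i^b)$, which rearranges to the per-player bound $\Delta_i(S) \le \alpha \pay_i(S)$. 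Summing over $i \in N$ yields $\Delta(S) \le \alpha \sum_{i \in N} \pay_i(S)$.

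Next, I would invoke the perfectness hypothesis $f(S) \ge \sum_{i \in N} \pay_i(S)$ (which is the standing assumption of this section, and is built into the definition of perfect $\beta$-nice games) to upgrade the previous inequality to $\Delta(S) \le \alpha f(S)$. Plugging this into the $\beta$-niceness inequality $\opt_f(\calG) \le \beta f(S) + \Delta(S)$ gives $\opt_f(\calG) \le (\alpha + \beta) f(S)$. Since $S$ was arbitrary in $\ne^\alpha(\calG)$, taking the worst case yields $\poa^\alpha_f(\calG) = \max_{S \in \ne^\alpha(\calG)} \opt_f(\calG)/f(S) \le \alpha + \beta$.

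There is no substantive obstacle here: the entire argument is a one-line extension of Lemma~\ref{lem_poa}. The only point worth emphasizing is that the multiplicative form of the $\alpha$-Nash condition translates cleanly into the per-player additive bound $\Delta_i(S) \le \alpha \pay_i(S)$, which, once combined with $f(S) \ge \sum_i \pay_i(S)$, produces exactly the extra additive $\alpha f(S)$ term that widens the price-of-anarchy guarantee from $\beta$ to $\alpha + \beta$.
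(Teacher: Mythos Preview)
Your argument is correct and matches the paper's proof essentially line for line: the paper's entire proof is the single sentence ``For any $\alpha$-Nash equilibrium $S$ we have $\Delta(S) \leq \alpha\sum_{i\in N}u_i(S)\leq \alpha f(S)$,'' which is exactly the chain you wrote out. Your observation that the step $\sum_{i}\pay_i(S)\le f(S)$ relies on the perfectness hypothesis (not just $\beta$-niceness as the lemma literally states) is a fair point; the paper uses it implicitly, treating it as a standing assumption of the section.
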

\begin{proof}
For any $\alpha$-Nash equilibrium $S$ we have
$\Delta(S) \leq \alpha\sum_{i\in N}u_i(S)\leq \alpha f(S)$.
\qed
\end{proof}
We now state a technical lemma that we use shortly in proving Theorem~\ref{TH_1}.
\begin{lemma}\label{LM_1}
Consider any non-cooperative game $\calG$ and any function
$f:\Sigma\to\RP\cup\{0\}$. For positive values of $\epsilon$, $a$, and $b$, 
any dynamic for which the increase in the value of $f$ at a step
leading from $S$ to $\bar{S}$ is at least $b - \frac{1}{a}f(S)$
converges to a state $S^F$ with  $f(S^F)  \geq ab(1 - \epsilon)$ in
at most $\big\lceil a\ln\frac{1}{\epsilon}\big\rceil$ steps, from
any initial state.
\end{lemma}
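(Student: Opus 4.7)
The plan is to recast the hypothesis as a pure multiplicative decay on the ``gap'' function $g(S) := ab - f(S)$. The inequality $f(\bar S) - f(S) \geq b - \frac{1}{a}f(S)$ can be rewritten as
$$
ab - f(\bar S) \;\leq\; ab - f(S) - \Bigl(b - \tfrac{1}{a}f(S)\Bigr) \;=\; \bigl(ab - f(S)\bigr)\Bigl(1 - \tfrac{1}{a}\Bigr),
$$
which is exactly $g(\bar S) \leq \bigl(1 - \tfrac{1}{a}\bigr)\, g(S)$. So in the new coordinate the hypothesis says that $g$ contracts by a factor $1 - 1/a$ at every step.

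Iterating this contraction from the initial state $S_0$ gives $g(S_t) \leq (1 - 1/a)^t\, g(S_0)$. Since $f$ takes non-negative values, $g(S_0) \leq ab$, so $g(S_t) \leq ab\,(1 - 1/a)^t$. I would then apply the standard bound $1 - 1/a \leq e^{-1/a}$, which yields $g(S_t) \leq ab\, e^{-t/a}$. Setting $t = \lceil a \ln(1/\epsilon)\rceil$ makes $e^{-t/a} \leq \epsilon$, hence $g(S_t) \leq ab\epsilon$, i.e., $f(S_t) \geq ab(1 - \epsilon)$, as claimed.

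The only thing to double-check is the regime where the contraction factor is not in $[0,1)$. For $a \geq 1$ the argument above is immediate. For $0 < a < 1$ one has $1 - 1/a < 0$, in which case a single step from any $S_0$ with $f(S_0) \geq 0$ already forces $f(\bar S) \geq b - f(S_0)/a + f(S_0)$, and an easy direct check shows $f$ jumps past $ab$ in one step (so the stated bound of $\lceil a\ln(1/\epsilon)\rceil$ steps, which is at most $1$ for small $a$, still suffices); I would handle this case in one line. The only real ``obstacle'' is noticing the right substitution $g = ab - f$; once that is in hand, the rest is a geometric-series calculation with the standard $\ln$-bound.
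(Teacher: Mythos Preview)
Your proof is correct and follows essentially the same route as the paper's: define a gap function, show it contracts by a factor $1-\tfrac{1}{a}$ at each step, iterate, and apply $1-\tfrac{1}{a}\le e^{-1/a}$. The paper uses $h(S)=b-\tfrac{1}{a}f(S)$ whereas you use $g(S)=ab-f(S)=a\cdot h(S)$, a harmless rescaling; your extra remark on the regime $0<a<1$ is not in the paper but is a reasonable aside.
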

The next theorem states that after a polynomial number of steps, for every perfect 
$\beta$-nice potential game, the basic Nash dynamic reaches a state whose 
relative quality 
(with respect to $f$) is close to the price of anarchy.

\begin{theorem}\label{TH_1}
Consider any function $f:\Sigma\to\RP\cup\{0\}$ and any game $\calG$
that is perfect $\beta$-nice with respect to $f$. For any
$\epsilon>0$ the basic Nash dynamic converges to a state $S^F$ with
$f(S^F) \geq \frac{\opt_f(\calG)}{\beta}(1 - \epsilon)$ in at most 
$\big\lceil\frac{n}{\beta}\ln\frac{1}{\epsilon}\big\rceil$ steps,
starting from any initial state.
\end{theorem}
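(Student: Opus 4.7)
The plan is to show that a single step of the basic Nash dynamic increases $f$ by at least an amount of the form $b - \frac{1}{a}f(S)$, so that Lemma~\ref{LM_1} can be applied with $a = n/\beta$ and $b = \opt_f(\calG)/n$, yielding exactly the bound in the statement.

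First I would use the $\beta$-niceness hypothesis to rewrite $\Delta(S) \geq \opt_f(\calG) - \beta f(S)$ for every state $S$. Since the basic Nash dynamic moves a player $i \in \arg\max_j \Delta_j(S)$, the chosen player's gain satisfies
\[
\Delta_i(S) \;\geq\; \frac{\Delta(S)}{n} \;\geq\; \frac{\opt_f(\calG)}{n} - \frac{\beta}{n}\,f(S).
\]
If $S$ is already a Nash equilibrium then $f(S) \geq \opt_f(\calG)/\beta$ by Lemma~\ref{lem_poa} and there is nothing to prove, so I may assume player $i$ performs a genuine improvement move leading to state $\bar S = (S_{-i}, s_i^b)$.

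Next I would invoke the perfectness hypothesis, which gives the chain
\[
f(\bar S) - f(S) \;\geq\; \Phi(\bar S) - \Phi(S) \;\geq\; \pay_i(\bar S) - \pay_i(S) \;=\; \Delta_i(S).
\]
Combining this with the previous bound shows that each step of the basic Nash dynamic satisfies the hypothesis of Lemma~\ref{LM_1} with $b = \opt_f(\calG)/n$ and $a = n/\beta$. Applying the lemma yields convergence in at most $\bigl\lceil \frac{n}{\beta}\ln\frac{1}{\epsilon}\bigr\rceil$ steps to a state $S^F$ with $f(S^F) \geq ab(1-\epsilon) = \frac{\opt_f(\calG)}{\beta}(1-\epsilon)$.

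There is no real obstacle: the argument is just a composition of the three assumed ingredients (the $\beta$-nice inequality, the max-gain selection rule of the basic dynamic, and the perfectness chain that transfers a player's payoff gain to an $f$-gain via the potential). The only minor point to handle cleanly is the degenerate case where the dynamic has already reached equilibrium, which is disposed of by Lemma~\ref{lem_poa} as above.
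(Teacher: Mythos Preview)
Your proposal is correct and follows essentially the same approach as the paper: use $\beta$-niceness to bound $\Delta(S)$, the max-gain selection to extract $\Delta_i(S)\ge\Delta(S)/n$, perfectness to transfer this to an $f$-increment, and then invoke Lemma~\ref{LM_1} with $a=n/\beta$, $b=\opt_f(\calG)/n$. Your explicit treatment of the degenerate case (the dynamic is already at equilibrium) via Lemma~\ref{lem_poa} is a small addition the paper omits, but otherwise the arguments are identical.
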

\begin{proof}
Consider a generic state $S$ of the dynamic. 
Since $\calG$ is $\beta$-nice,
we have
$\Delta(S)  \geq  \opt_f(\calG) - \beta \cdot f(S)$.
Let $i$ be the player moving in state $S$, and let $\bar{S}$ be the state resulting
from the move of player $i$. Since $i$ is the player with the maximum absolute improvement,
we get
$$
 f(\bar{S}) - f(S)  \geq   \Phi(\bar{S}) - \Phi(S) 
 \geq  \Delta_i(S) 
  \geq  \frac{\Delta(S)}{n}  
  \geq   \frac{\opt_f(\calG) - \beta\cdot f(S)}{n}.
$$
The theorem now follows by applying Lemma~\ref{LM_1} with $b=\frac{\opt_f(\calG)}{n}$ and 
$a=\frac{n}{\beta}$.
\qed
\end{proof}
A convergence result similar to Theorem~\ref{TH_1} 
can be obtained for basic $\alpha$-Nash dynamic.

\begin{theorem}\label{TH_2}
Consider any function $f:\Sigma\to\RP\cup\{0\}$ and any game $\calG$
that is perfect $\beta$-nice with respect to $f$. For any
$\epsilon>0$ and any $\alpha\ge 0$ the basic $\alpha$-Nash dynamic
converges to a state $S^F$ with $f(S^F) \geq
\frac{\opt_f(\calG)}{\beta+\alpha}(1 - \epsilon)$ in at most
$\big\lceil\frac{n}{\beta+\alpha}\ln\frac{1}{\epsilon}\big\rceil$
steps, starting from any initial state.
\end{theorem}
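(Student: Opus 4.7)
The plan is to mirror the proof of Theorem~\ref{TH_1}, inserting one extra bookkeeping step to absorb the $\alpha$-slack that an $\alpha$-Nash dynamic is allowed to tolerate. Concretely, I will fix a state $S$ reached during the dynamic that is not an $\alpha$-Nash equilibrium, let $i$ be the player chosen by the basic rule, and let $\bar S$ be the state after $i$'s move. From perfect $\beta$-niceness I already get $f(\bar S)-f(S)\ge \Delta_i(S)$, and from $\beta$-niceness I already get $\Delta(S)\ge \opt_f(\calG)-\beta f(S)$. My goal is therefore to lower-bound $\Delta_i(S)$ by a quantity of the form $b-\tfrac{1}{a}f(S)$ so that Lemma~\ref{LM_1} finishes the argument; the only new ingredient needed is a bound on $\Delta_i(S)$ in terms of $\Delta(S)$ that works for $\alpha$-best response dynamics rather than exact best response.

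The extra bookkeeping is to split $N$ into the set $P$ of players currently holding an $\alpha$-improvement move, i.e.\ $\Delta_j(S)>\alpha u_j(S)$, and its complement $Q=N\setminus P$. For every $j\in Q$ we have $\Delta_j(S)\le \alpha u_j(S)$, and because of the perfect inequality $\sum_j u_j(S)\le f(S)$ and non-negativity of payoffs this gives $\sum_{j\in Q}\Delta_j(S)\le \alpha f(S)$. Subtracting from the $\beta$-nice bound yields
\[
\sum_{j\in P}\Delta_j(S)\;=\;\Delta(S)-\sum_{j\in Q}\Delta_j(S)\;\ge\;\opt_f(\calG)-(\beta+\alpha)\,f(S).
\]
Since the moving player $i$ must in particular have an $\alpha$-improvement move, $i\in P$; as $i$ realizes the maximum absolute improvement over all of $N$, it certainly realizes the maximum over $P$, so $\Delta_i(S)\ge \tfrac{1}{|P|}\sum_{j\in P}\Delta_j(S)\ge \tfrac{\opt_f(\calG)-(\beta+\alpha)f(S)}{n}$.

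Combining with perfect $\beta$-niceness gives $f(\bar S)-f(S)\ge \tfrac{\opt_f(\calG)-(\beta+\alpha)f(S)}{n}$, which is exactly the hypothesis of Lemma~\ref{LM_1} with $a=\tfrac{n}{\beta+\alpha}$ and $b=\tfrac{\opt_f(\calG)}{n}$. That lemma then delivers the claimed $f(S^F)\ge ab(1-\epsilon)=\tfrac{\opt_f(\calG)}{\beta+\alpha}(1-\epsilon)$ within $\big\lceil \tfrac{n}{\beta+\alpha}\ln\tfrac{1}{\epsilon}\big\rceil$ steps. Should the dynamic terminate earlier at an $\alpha$-Nash equilibrium, Lemma~\ref{lem_poa_alpha} already guarantees $f(S)\ge \opt_f(\calG)/(\beta+\alpha)$, which is strictly stronger than the stated bound, so early termination causes no issue. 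The one place where I expect to need to be careful is verifying that the $P/Q$ split interacts correctly with the $\alpha$-improvement threshold at the boundary (strict vs.\ non-strict inequalities and the possibility that $P=\varnothing$, in which case $S$ is an $\alpha$-Nash equilibrium and the previous remark applies); once this is handled, the rest is a direct reuse of the machinery already developed for Theorem~\ref{TH_1}.
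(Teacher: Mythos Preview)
Your proposal is correct and takes essentially the same approach as the paper: your $P/Q$ split is exactly the paper's $U/E$ split, the bound $\sum_{j\in Q}\Delta_j(S)\le\alpha f(S)$ is derived the same way, and the remainder reduces to Lemma~\ref{LM_1} with the same choice of $a$ and $b$. The paper does not spell out the early-termination case you mention, but otherwise the arguments coincide.
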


\section{Profit-sharing games}\label{sec:main} 
In this section, we study three non-cooperative games that
can be constructed from an arbitrary monotone convex cooperative game. 

Each of our games can be described by a triple $\calG=(N, v, M)$, 
where $(N, v)$ is a monotone convex 
cooperative game with $N=\{1, \dots, n\}$,
and $M=\{1, \dots, m\}$ is a set of $m$ parties; we require $m\le n$. 
All three games considered
in this section model the setting where the players in $N$ 
form a coalition structure over $N$ that consists of $m$ coalitions.
Thus, each player needs to choose exactly one party from $M$, i.e., 
for each $i\in N$ we have $\Sigma_i=M$. In some cases (see Section~\ref{sec:LU}), 
we also allow players to be unaffiliated. To model this, 
we expand the set of strategies by setting $\Sigma_i=M\cup\{0\}$. 
Intuitively, the parties correspond to different companies, and the players
correspond to the potential employees of these companies; we desire 
to assign employees to companies so as to maximize the total productivity.

In two of our games 
(see Section~\ref{sec:FM} and Section~\ref{sec:SH}), a state
of the game is completely described by the assignment of the players
to the parties, i.e., we can write $S=(s_1, \dots, s_n)$, 
where $s_i\in M$ for all $i\in N$. Alternatively, we can specify
a state of the game by providing a partition of the set $N$
into $m$ components $Q_1, \dots, Q_m$, where $Q_j$ is the set
of all players that chose party $j$, i.e., we can write $S = (Q_1, \dots, Q_m)$;
we will use both forms of notation throughout the paper.
In the game described in Section~\ref{sec:LU}, the state of the game depends not only
on which parties the players chose, but also on the order in which 
they joined the party; we postpone the formal description of this
model till Section~\ref{sec:LU}. In all three models, 
each player's payoff is based on the concept of marginal utility; 
however, in different models this idea is instantiated in different ways.

An important parameter of a state $S=(Q_1, \dots, Q_m)$ in each of these games is 
its {\em total profit} $\tp(S)=\sum_{j\in M}v(Q_j)$. While for the games
defined in Section~\ref{sec:LU} and Section~\ref{sec:SH}, the total profit
coincides with the social welfare, for the game described
in Section~\ref{sec:FM} this is not necessarily the case.
As we are interested in finding the most efficient partition of players 
into teams, we consider the total profit of a state a more relevant 
quantity than its social welfare. Therefore, 
in what follows, we will consider the price of anarchy and the price of
stability with respect to the total profit, i.e., 
we have $\opt(\calG)=\opt_\tp(\calG)$, $\poa(\calG)=\poa_\tp(\calG)$, 
$\pos(\calG)=\pos_\tp(\calG)$. 

All of our results generalize to the setting where each party $j\in M$
is associated with a different non-decreasing submodular profit function 
$v_j:2^N\to\RP\cup\{0\}$,
i.e., different companies possess different technologies, and therefore may
have different levels of productivity.
Formally, any such game is given by a tuple $\calG=(N, v_1, \dots, v_m, M)$, 
where $M=\{1, \dots, m\}$, and for each $j\in M$ the function 
$v_j$ is a non-decreasing submodular function $v_j:2^N\to\RP\cup\{0\}$ 
that satisfies $v(\emptyset)=0$.
In this case, 
the total profit function in a state $S=(Q_1, \dots, Q_m)$ is given by
$\tp(S)=\sum_{j\in M}v_j(Q_j)$.
In what follows, we present our results for this more general setting.

\subsection{Fair Value games}\label{sec:FM}
In our first model, the utility $u_i(S)$ of a player $i$ in a state
$S=(Q_1, \dots, Q_m)$ is given by $i$'s marginal contribution to
the coalition he belongs to, i.e., if $i\in Q_j$, we set
$u_i(S)=v_j(S)-v_j(S\setminus\{i\})$. As this payment scheme
rewards each player according to the value he creates, 
we will refer to this type of games as {\em Fair Value games}.
Observe that since the functions $v_j$
are assumed to be submodular, 
we have $\sum_{i\in Q_j}u_i(S)\le v_j(Q_j)$ for all $j\in M$, i.e., the total
payment to the employees of a company never exceeds the profit
of the company. Moreover, it may be the case that the profit 
of a company is strictly greater than the amount it pays to its employees;
we can think of the difference between the two quantities as the owner's/shareholders'
value. Consequently, in these games
the total profit of all parties may differ from the social
welfare, as defined in Section~\ref{sec:prelim}. 

We will now argue that 
Fair Value games have a number of desirable properties.
In particular, any such game is a potential game, 
and therefore has a pure Nash equilibrium. The proof of the following theorem
can be found in Appendix~\ref{app:FM}.

\begin{theorem}\label{th_2}
Every Fair Value game $\calG$ is a perfect $2$-nice exact potential game
w.r.t. the total profit function.
\end{theorem}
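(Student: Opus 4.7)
The plan is to take the potential function equal to the total profit, $\Phi = \tp$, and verify the three requirements in turn. For the exact potential property, suppose player $i$ moves from party $j$ to party $k \neq j$, carrying $S = (Q_1, \dots, Q_m)$ to $S' = (S_{-i}, k)$. Only the $j$-th and $k$-th summands of $\tp$ change, and a direct computation gives
\[
\Phi(S') - \Phi(S) = \bigl[v_k(Q_k \cup \{i\}) - v_k(Q_k)\bigr] - \bigl[v_j(Q_j) - v_j(Q_j \setminus \{i\})\bigr] = u_i(S') - u_i(S),
\]
so $\Phi$ is an exact potential; in particular, both inequalities in the perfectness condition automatically hold with equality.

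The remaining part of perfectness is the bound $\tp(S) \geq \sum_i u_i(S)$, which I would prove party by party. Order $Q_j = \{x_1, \dots, x_k\}$ arbitrarily and telescope $v_j(Q_j) = \sum_{\ell=1}^{k} \bigl[v_j(\{x_1, \dots, x_\ell\}) - v_j(\{x_1, \dots, x_{\ell-1}\})\bigr]$; by submodularity (applied to $\{x_1, \dots, x_{\ell-1}\} \subseteq Q_j \setminus \{x_\ell\}$), each telescoped term is at least $v_j(Q_j) - v_j(Q_j \setminus \{x_\ell\}) = u_{x_\ell}(S)$, so $v_j(Q_j) \geq \sum_{i \in Q_j} u_i(S)$. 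Summing over $j$ yields $\tp(S) \geq \sum_i u_i(S)$.

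The main obstacle, and the heart of the argument, is the $2$-niceness bound $\opt(\calG) \leq 2\tp(S) + \Delta(S)$. Fix an optimal state $O = (O_1, \dots, O_m)$ and, for each player $i$, let $j^*_i$ denote the party of $i$ in $O$. Applying Lemma~\ref{lem_sub} with $Y = Q_j$ and $X = O_j \setminus Q_j$ (which are disjoint), and then invoking monotonicity of $v_j$, yields, for every party $j$,
\[
\sum_{i \in O_j \setminus Q_j} \bigl[v_j(Q_j \cup \{i\}) - v_j(Q_j)\bigr] \;\geq\; v_j(Q_j \cup O_j) - v_j(Q_j) \;\geq\; v_j(O_j) - v_j(Q_j).
\]
Summing over $j$, the left-hand side reorganizes as $\sum_{i:\, i \notin Q_{j^*_i}} u_i(S_{-i}, j^*_i)$, while the right-hand side becomes $\opt(\calG) - \tp(S)$. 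Since moving to $j^*_i$ is a candidate deviation available to player $i$, we have $u_i(S_{-i}, j^*_i) \leq u_i(S) + \Delta_i(S)$ for every $i$ with $i \notin Q_{j^*_i}$; summing these inequalities and applying the just-proved $\sum_i u_i(S) \leq \tp(S)$ gives $\opt(\calG) - \tp(S) \leq \Delta(S) + \tp(S)$, which is exactly the $2$-niceness condition.
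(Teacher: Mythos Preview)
Your proof is correct and follows essentially the same route as the paper's: both arguments take $\Phi=\tp$, apply Lemma~\ref{lem_sub} (with $Y=Q_j$, $X=O_j\setminus Q_j$) together with monotonicity to get $\sum_{i\in O_j\setminus Q_j}\bigl[v_j(Q_j\cup\{i\})-v_j(Q_j)\bigr]\ge v_j(O_j)-v_j(Q_j)$, bound each deviation payoff by $u_i(S)+\Delta_i(S)$, and finish with $\sum_i u_i(S)\le\tp(S)$. The only cosmetic difference is that the paper organizes the computation around $\Delta_{Q^*_k}(S)$ and lower-bounds it, whereas you start from $v_j(O_j)-v_j(Q_j)$ and upper-bound the marginal-contribution sum; you are also more explicit in spelling out the exact-potential verification and the $\tp(S)\ge\sum_i u_i(S)$ inequality, which the paper merely asserts.
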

Combining Theorem~\ref{th_2}, Lemmas~\ref{lem_poa} and~\ref{lem_poa_alpha} 
and Theorems~\ref{TH_1} and~\ref{TH_2}, 
we obtain the following corollaries.

\begin{corollary}
For every Fair Value game $\calG$ and every $\alpha\ge 0$
we have $\poa^\alpha(\calG) \le 2+\alpha$. In particular, 
$\poa(\calG) \le 2$.
\end{corollary}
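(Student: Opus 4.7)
The plan is to derive this corollary as a direct consequence of Theorem~\ref{th_2} together with Lemmas~\ref{lem_poa} and~\ref{lem_poa_alpha}; essentially no new work is needed beyond invoking these results with the correct parameters. First I would observe that Theorem~\ref{th_2} tells us that every Fair Value game $\calG$ is perfect $\beta$-nice with respect to the total profit function $\tp$, with $\beta=2$. This is precisely the hypothesis required by both lemmas about price of anarchy.

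Next, I would apply Lemma~\ref{lem_poa_alpha} with $f=\tp$ and $\beta=2$ to conclude that $\poa^\alpha(\calG) = \poa^\alpha_{\tp}(\calG) \le \alpha + 2$ for every $\alpha\ge 0$. Specializing this to $\alpha=0$ (or invoking Lemma~\ref{lem_poa} directly) yields $\poa(\calG)\le 2$, giving the ``in particular'' clause of the corollary.

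The only thing worth double-checking is that the definition of $\poa^\alpha$ used in Lemma~\ref{lem_poa_alpha} is indeed the one being referenced by the corollary, and that the function $f$ used to define $\poa^\alpha$ is $\tp$ (so that the bound transfers to $\poa^\alpha(\calG)$ without the subscript, in line with the paper's convention that $\poa(\calG)=\poa_{\tp}(\calG)$ for these games). Both are direct from the conventions fixed earlier in Section~\ref{sec:main}, so there is no real obstacle: the content of the corollary is entirely absorbed by the $\beta$-niceness established in Theorem~\ref{th_2}, and the proof reduces to a two-line citation chain.
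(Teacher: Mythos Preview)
Your proposal is correct and matches the paper's own argument exactly: the corollary is obtained simply by combining Theorem~\ref{th_2} (which shows every Fair Value game is perfect $2$-nice with respect to $\tp$) with Lemmas~\ref{lem_poa} and~\ref{lem_poa_alpha}. There is nothing to add.
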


\begin{corollary}
For every Fair Value game $\calG$ and any $\epsilon>0$, the basic
Nash dynamic (respectively, the basic $\alpha$-Nash dynamic)
converges to a state $S^F$ with total profit $\tp(S^F) \geq
\frac{\opt(\calG)}{2}(1 - \epsilon)$ (respectively, $\tp(S^F) \geq
\frac{\opt(\calG)}{2+\alpha}(1 - \epsilon)$) in at most $\big\lceil
\frac{n}{2}\ln\frac{1}{\epsilon}\big\rceil$ steps
(respectively, $\big\lceil
\frac{n}{2+\alpha}\ln\frac{1}{\epsilon}\big\rceil$
steps), from any initial state.
\end{corollary}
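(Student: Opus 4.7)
The plan is to derive the corollary as an immediate consequence of the structural result in Theorem~\ref{th_2} together with the two convergence theorems for perfect $\beta$-nice games. The key observation is that Theorem~\ref{th_2} asserts that every Fair Value game $\calG$ is perfect $\beta$-nice with respect to $\tp$ for the specific value $\beta=2$; this precisely matches the hypothesis required by Theorems~\ref{TH_1} and~\ref{TH_2}, so no additional combinatorial reasoning about Fair Value games is needed here.

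First I would invoke Theorem~\ref{TH_1} with $f=\tp$ and $\beta=2$. Since $\calG$ is perfect $2$-nice w.r.t.\ $\tp$, the theorem directly guarantees that the basic Nash dynamic, starting from an arbitrary state, reaches a state $S^F$ with
\[
\tp(S^F)\ \ge\ \frac{\opt(\calG)}{2}\,(1-\epsilon)
\]
in at most $\bigl\lceil \tfrac{n}{2}\ln\tfrac{1}{\epsilon}\bigr\rceil$ steps, which matches the first bound in the statement exactly.

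Next I would invoke Theorem~\ref{TH_2} with the same choice $f=\tp$, $\beta=2$, and the given $\alpha\ge 0$. Again, by Theorem~\ref{th_2}, the hypothesis ``perfect $\beta$-nice w.r.t.\ $f$'' is satisfied, so Theorem~\ref{TH_2} yields convergence of the basic $\alpha$-Nash dynamic to a state $S^F$ with
\[
\tp(S^F)\ \ge\ \frac{\opt(\calG)}{2+\alpha}\,(1-\epsilon)
\]
in at most $\bigl\lceil \tfrac{n}{2+\alpha}\ln\tfrac{1}{\epsilon}\bigr\rceil$ steps, which is exactly the second bound.

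There is no real obstacle: the corollary is literally the specialization of Theorems~\ref{TH_1} and~\ref{TH_2} to $\beta=2$, and the only substantive content, namely that Fair Value games fit into the perfect $2$-nice framework with respect to $\tp$, has already been established in Theorem~\ref{th_2}. The only care needed is to note that the ``from any initial state'' clause is inherited verbatim from the two convergence theorems, and that the same argument applies uniformly in the more general setting where each party $j\in M$ has its own monotone submodular profit function $v_j$, since Theorem~\ref{th_2} is stated in that generality.
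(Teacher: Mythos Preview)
Your proposal is correct and matches the paper's approach exactly: the paper states this corollary without a separate proof, simply noting that it follows by combining Theorem~\ref{th_2} (Fair Value games are perfect $2$-nice w.r.t.\ $\tp$) with Theorems~\ref{TH_1} and~\ref{TH_2}. There is nothing to add.
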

Since every Fair Value game is an exact potential game
with the potential function given by the total profit, any 
profit-maximizing state is necessarily a Nash equilibrium.
This implies the following proposition.
\begin{proposition}
For any Fair Value game $\calG$ we have $\pos(\calG)=1$.
\end{proposition}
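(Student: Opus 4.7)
The plan is to exhibit a Nash equilibrium whose total profit equals $\opt(\calG)$; since the price of stability is always at least $1$, this will force $\pos(\calG)=1$. The only non-trivial ingredient needed is already established in Theorem~\ref{th_2}, which says that a Fair Value game is an \emph{exact} potential game with potential function $\Phi=\tp$.

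Concretely, let $S^*$ be any state that maximizes the total profit, so $\tp(S^*)=\opt(\calG)$. I claim $S^*\in\ne(\calG)$. Suppose not: then some player $i$ has an improvement move $s'_i\in\Sigma_i$ with $u_i(S^*_{-i},s'_i)>u_i(S^*)$. Because $\Phi=\tp$ is an exact potential, this deviation satisfies
\[
\tp(S^*_{-i},s'_i)-\tp(S^*)=u_i(S^*_{-i},s'_i)-u_i(S^*)>0,
\]
producing a state with strictly larger total profit than $S^*$, which contradicts the optimality of $S^*$. Hence $S^*$ is a Nash equilibrium with $\tp(S^*)=\opt(\calG)$, giving $\pos(\calG)\le 1$. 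Combined with the trivial lower bound $\pos(\calG)\ge 1$, we conclude $\pos(\calG)=1$.

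There is no real obstacle here: the whole argument is a one-line consequence of the exact potential structure, in which every unilateral improvement is mirrored by a strict increase in $\tp$, so optima of $\tp$ cannot admit improving deviations. The only thing to be careful about is invoking the \emph{exact} (not merely ordinal) potential property, which is exactly what Theorem~\ref{th_2} provides.
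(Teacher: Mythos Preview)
Your proof is correct and matches the paper's own reasoning essentially verbatim: Theorem~\ref{th_2} gives that $\tp$ is an exact potential, so any $\tp$-maximizing state admits no improving deviation and is therefore a Nash equilibrium, yielding $\pos(\calG)=1$. (As a minor aside, even the ordinal potential property with $\Phi=\tp$ would suffice here, so exactness is not strictly required.)
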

\subsection{Labor Union Games}\label{sec:LU}
In Fair Value games, the player's payoff only depends on his
current marginal value to the enterprise, i.e., one's salary may go
down as the company expands. However, in many real-life settings,
this is not the case. For instance, in many industries, especially
ones that are highly unionized, an employee that has spent many
years working for the company typically receives a higher salary
than a new hire with the same set of skills. Our second class of
games, which we will refer to as {\em Labor Union games}, aims to
model this type of settings. Specifically, in this class of games,
we modify the notion of state so as to take into account the order
in which the players have joined their respective parties; the
payment to each player is then determined  by his marginal utility
{\em for the coalition formed by his predecessors}. The
submodularity property guarantees than a player's payoff never goes
down as long as he stays with the same party.


Formally, in a Labor Union game $\calG$ that corresponds to a tuple 
$(N, v_1, \dots, v_m, M)$, 
we allow the players to be unaffiliated, i.e., 
for each $i\in N$ we set $\Sigma_i=M\cup\{0\}$. If player $i$ plays strategy $0$, 
we set his payoff to be $0$ irrespective of the other players' strategies.
A {\em state} of $\calG$ is given by a tuple
$\P = (P_1, \dots, P_m)$, where $P_j$ is the sequence of players in party $j$,  
ordered according to their arrival time. 
As before, the profit of party $j$ is given by the function $v_j$; 
note that the value of $v_j$ does not  
depend on the order in which the players join $j$. The payoff of each player, 
however, is dependent on their position in the affiliation order. 
Specifically, for a player $i \in P_j$, let 
$P_j(i)$ be the set of players that appear in $P_j$ before $i$.
Player $i$'s payoff is then defined  
as $\pay_i(\P) = v_j(P_j(i) \cup \{i\}) - v_j(P_j(i))$. 

We remark that, technically speaking, Labor Union games are not
non-cooperative games. Rather, each state of a Labor Union game
induces a non-cooperative game as described above; after any player
makes a move, the induced non-cooperative game changes. Abusing
terminology, we will say that a state $\P$ of a Labor Union game
$\calG$ is a Nash equilibrium if for each player $i\in N$ staying
with his current party is a best response in the induced game; all
other notions that were defined for non-cooperative games in
Section~\ref{sec:prelim}, as well as the results in
Section~\ref{sec:beta}, can be extended to Labor Union games in a
similar manner.

We now state two fundamental properties of our model. 
\begin{itemize}
\item
Guaranteed payoff: Consider two players $i$ and $i'$ in $P_j$.
Suppose $i'$ moves to another party. The payoff of player $i$ will
not decrease. Indeed, if $i'$ succeeds $i$ in the sequence $P_j$,
then by definition, $i$'s payoff is unchanged. If $i'$ precedes $i$
in $P_j$, then, since $v_j$ is non-decreasing and submodular, $i$'s
payoff will not decrease; it may, however, increase.

\item
Full payoff distribution: The sum of the payoffs of players within a
party $j$ is a telescopic sum that evaluates to $v_j(P_j)$.
Therefore, the total profit $\tp(\P)=\sum_{j\in M}v_j(P_j)$ in a
state $\P$ equals to the social welfare in this state. In other
words, in Labor Union games, the profit of each enterprise is
distributed among its employees, without creating any value for the
owners/shareholders.
%
%
\end{itemize}
The guaranteed payoff property distinguishes the Labor Union games
from the Fair Value games, where 
a player who maintains his affiliation
to a party might not be rewarded, but may rather see a reduction in his payoff  
as other players move to join his party. This, of course,
may incentivize him to shift his affiliation as well,
leading to a vicious cycle of moves. In contrast, 
in Labor Union games, a player is guaranteed that
his payoff will not decrease if he maintains his affiliation to a party.
This suggests that in Labor Union games stability may be
easier to achieve. In what follows, we will see that this 
is indeed the case.

We will first show that Labor Union games are perfect $2$-nice
with respect to the total profit (or, equivalently, social welfare);
this will allow us to apply the machinery developed in 
Section~\ref{sec:beta}.
Abusing notation, let
$\Delta_i(\P)$ denote the improvement in the payoff of player $i$ if he performs a
best response move from $\P$, and let $\Delta(\P) = \sum_{i\in N} \Delta_i(\P)$.
\begin{proposition}\label{prop:LU-nice}
Any Labor Union game $\calG$ is a perfect 2-nice game
with respect to the total profit function. 
\end{proposition}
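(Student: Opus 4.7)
The plan is to take the potential function $\Phi$ to coincide with the total profit $\tp$ itself, and to verify each clause of Definition~\ref{def:beta-nice}. The inequality $\tp(\P) \geq \sum_i u_i(\P)$ required for perfectness holds with equality by the full payoff distribution property already noted in the text. For the chain of inequalities in the definition, consider an improvement move by player $i$ from party $j$ (with predecessor set $P_j(i)$) to party $j'$, and write $Q_j$, $Q_{j'}$ for the underlying unordered player sets in the two parties before the move. Since $i$ joins at the end of $P_{j'}$, one computes
\[
u_i(\P') - u_i(\P) = \bigl[v_{j'}(Q_{j'}\cup\{i\}) - v_{j'}(Q_{j'})\bigr] - \bigl[v_j(P_j(i)\cup\{i\}) - v_j(P_j(i))\bigr],
\]
whereas $\tp(\P') - \tp(\P)$ has the same first bracket but with the second bracket replaced by $v_j(Q_j) - v_j(Q_j\setminus\{i\})$. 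Since $P_j(i) \subseteq Q_j\setminus\{i\}$, submodularity of $v_j$ makes the $\tp$-version of the second bracket no larger, so $\tp(\P') - \tp(\P) \geq u_i(\P') - u_i(\P) > 0$; taking $\Phi = f = \tp$ thus collapses the perfect-game chain of inequalities.

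For the $2$-nice property, I would fix an optimal state $\P^*$ with party sets $Q^*_1,\dots,Q^*_m$ and a generic state $\P$ with party sets $Q_1,\dots,Q_m$, and lower-bound $\Delta(\P)$ by the ``deviate to the optimal party'' strategy, following the template of Theorem~\ref{th_2}. For each player $i\in Q^*_k\setminus Q_k$, moving to party $k$ places $i$ last in the sequence, yielding payoff $v_k(Q_k\cup\{i\}) - v_k(Q_k)$, so $\Delta_i(\P) \geq v_k(Q_k\cup\{i\}) - v_k(Q_k) - u_i(\P)$; for $i \in Q^*_k\cap Q_k$ I use the trivial bound $\Delta_i(\P) \geq 0$. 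Applying Lemma~\ref{lem_sub} to each party with $Y = Q_k$ and $X = Q^*_k\setminus Q_k$, together with monotonicity of $v_k$, telescopes these marginals into at least $v_k(Q^*_k) - v_k(Q_k)$. Summing over $k$ and invoking full payoff distribution $\sum_i u_i(\P) = \tp(\P)$ yields
\[
\Delta(\P) \;\geq\; \tp(\P^*) - \tp(\P) - \sum_k \sum_{i \in Q^*_k\setminus Q_k} u_i(\P) \;\geq\; \opt(\calG) - 2\,\tp(\P),
\]
which is precisely the $2$-nice condition.

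The main obstacle I anticipate is the potential-function step: because Labor Union games are history-dependent, $u_i(\P)$ is measured against the predecessor set $P_j(i)$, whereas the contribution of $i$ to $\tp$ is measured against the full set $Q_j\setminus\{i\}$ of his partymates. Submodularity of $v_j$ bridges the gap in exactly the needed direction, but some care is required in tracking which marginal sits on which side of the inequality. A secondary technicality is the unaffiliated strategy $s_i = 0$, which gives $u_i(\P) = 0$ and is handled uniformly by the deviation argument above; at the optimum one may assume every player is affiliated, by monotonicity of each $v_j$, so no edge cases arise.
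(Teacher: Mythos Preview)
Your proposal is correct and follows essentially the same approach as the paper: take $\Phi=\tp$, use submodularity to establish the perfect-game chain, and for $2$-niceness bound the gain from each player deviating to his optimal party and combine via Lemma~\ref{lem_sub} and full payoff distribution. If anything, your treatment of the perfectness clause is more careful than the paper's, which simply asserts (without the explicit submodularity step you supply) that an improvement move raises $\tp$ by at least as much as the mover's payoff; your observation that $P_j(i)\subseteq Q_j\setminus\{i\}$ is exactly the point that makes this go through.
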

\begin{proof}
It is easy to see that $\calG$ is
a potential game with the potential function $\Phi(\P)=\tp(\P)$.
Furthermore, for any player $i$ the increase in his payoff when he performs
an improvement move does not exceed the change in the total profit.  
It remains to show that $2 \tp(\P) + \Delta(\P) \ge \opt(\calG)$
for any $\P=(P_1, \dots, P_m)$.
We have
\begin{align*}   
v_j(O_j) \le v_j(P_j \cup O_j) = v_j(P_j) + v_j(P_j \cup O_j) -  v_j(P_j) 
         \le  v_j(P_j) + \sum_{i \in O_j\setminus P_j} (u_i(P_i) + \Delta_i(\P)).
\end{align*}
Summing over all parties, we obtain
$$
\opt(\calG) = \sum_{j \in M}  v_j(O_j) \le 
\sum_{j\in M}v_j(P_j)+\sum_{j\in M}\sum_{i \in O_j\setminus P_j} u_i(P_i)+
\sum_{j\in M}\sum_{i \in O_j\setminus P_j} \Delta_i(P)
\le 2\tp(\P) + \Delta(\P).
$$
\qed
\end{proof}
As in the case of Fair Value games, Proposition~\ref{prop:LU-nice}
allows us to bound the price of anarchy of any Labor Union game, 
as well as the time it takes to converge to a state with a ``good''
total profit.

\begin{corollary}
For every Labor Union game $\calG$ and every $\alpha\ge 0$
we have $\poa^\alpha(\calG) \le 2+\alpha$. In particular,
$\poa(\calG) \le 2$.
\end{corollary}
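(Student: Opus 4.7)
The statement is an immediate consequence of the machinery already assembled. By Proposition~\ref{prop:LU-nice}, every Labor Union game $\calG$ is perfect $2$-nice with respect to the total profit function $\tp$. The plan is therefore simply to plug $\beta = 2$ and $f = \tp$ into Lemma~\ref{lem_poa_alpha}, which states that any game which is $\beta$-nice with respect to $f$ satisfies $\poa^\alpha_f(\calG) \le \alpha + \beta$ for every $\alpha \ge 0$.

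The first step is to note that the definition of $\poa^\alpha(\calG)$ used in this section is $\poa^\alpha_\tp(\calG)$, since by the convention announced at the end of Section~\ref{sec:main}'s introduction, all price-of-anarchy quantities for the profit-sharing games are taken with respect to the total profit. One subtle point is that Lemma~\ref{lem_poa_alpha} was stated for non-cooperative games, whereas Labor Union games are formally a family of non-cooperative games parametrized by the arrival-order state; however, the discussion following the definition of Labor Union games already extends the relevant notions (Nash equilibrium, $\alpha$-Nash equilibrium, price of anarchy, and the $\beta$-niceness results from Section~\ref{sec:beta}) to this setting in the obvious way, so no additional work is required.

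Putting the pieces together, for any $\alpha$-Nash equilibrium $\P$ of $\calG$ the proof of Lemma~\ref{lem_poa_alpha} gives $\Delta(\P) \le \alpha \sum_{i \in N} u_i(\P) \le \alpha \cdot \tp(\P)$ (the second inequality using the perfection condition $\tp(\P) \ge \sum_i u_i(\P)$ established in Proposition~\ref{prop:LU-nice}), and combining this with the $2$-nice inequality $2\,\tp(\P) + \Delta(\P) \ge \opt(\calG)$ yields $(2+\alpha)\,\tp(\P) \ge \opt(\calG)$, i.e., $\opt(\calG)/\tp(\P) \le 2+\alpha$. Taking the maximum over all $\P \in \ne^\alpha(\calG)$ gives $\poa^\alpha(\calG) \le 2+\alpha$, and specializing to $\alpha=0$ yields $\poa(\calG) \le 2$.

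There is no real obstacle: the content lies entirely in Proposition~\ref{prop:LU-nice} and Lemma~\ref{lem_poa_alpha}, which have already been proved. The only thing to be careful about is the verification that the perfection hypothesis $\tp(\P) \ge \sum_i u_i(\P)$ holds with equality for Labor Union games (by the ``full payoff distribution'' property), so that the chain of inequalities in the proof of Lemma~\ref{lem_poa_alpha} goes through without modification.
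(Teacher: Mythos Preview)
Your proposal is correct and matches the paper's approach exactly: the corollary is stated without proof as an immediate consequence of Proposition~\ref{prop:LU-nice} together with Lemmas~\ref{lem_poa} and~\ref{lem_poa_alpha}, and you have simply unpacked that inference. Your attention to the technical point that Labor Union games are not literally non-cooperative games (and that the paper explicitly extends the Section~\ref{sec:beta} results to them) is appropriate and mirrors the remark the paper makes just before Proposition~\ref{prop:LU-nice}.
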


\begin{corollary}
For every Labor Union game $\calG$ and any $\epsilon>0$, the basic
Nash dynamic (respectively, the basic $\alpha$-Nash dynamic)
converges to a state $S^F$ with total profit $\tp(S^F) \geq
\frac{\opt(\calG)}{2}(1 - \epsilon)$ (respectively, $\tp(S^F) \geq
\frac{\opt(\calG)}{2+\alpha}(1 - \epsilon)$) in at most $\big\lceil
\frac{n}{2}\ln\frac{1}{\epsilon}\big\rceil$ steps
(respectively, $\big\lceil
\frac{n}{2+\alpha}\ln\frac{1}{\epsilon}\big\rceil$
steps), from any initial state.
\end{corollary}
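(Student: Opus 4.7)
The plan is to apply the general convergence machinery from Section~\ref{sec:beta} as a black box. By Proposition~\ref{prop:LU-nice}, every Labor Union game $\calG$ is a perfect $2$-nice game with respect to the total profit function $\tp$. This means we can invoke Theorem~\ref{TH_1} with $\beta=2$ and $f=\tp$ to obtain the first statement: the basic Nash dynamic reaches a state $S^F$ with $\tp(S^F)\ge \frac{\opt(\calG)}{2}(1-\epsilon)$ in at most $\big\lceil \frac{n}{2}\ln\frac{1}{\epsilon}\big\rceil$ steps. Similarly, Theorem~\ref{TH_2} with $\beta=2$ and $f=\tp$ yields the analogous bound for the basic $\alpha$-Nash dynamic with convergence speed $\big\lceil \frac{n}{2+\alpha}\ln\frac{1}{\epsilon}\big\rceil$.

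The only subtlety I would need to address is the status of Labor Union games as non-cooperative games. As noted in the paper, a Labor Union game is not literally a non-cooperative game in the sense of Section~\ref{sec:prelim}, because each move reshapes the arrival order and therefore the induced payoff structure. However, the paper has explicitly committed to extending the definitions and results of Section~\ref{sec:beta} to Labor Union games by interpreting ``best response'' relative to the currently induced game. I would briefly remark that this extension preserves all the ingredients needed by Theorems~\ref{TH_1} and~\ref{TH_2}, namely: the total profit serves as a potential function, the improvement in any player's payoff is bounded by the change in potential, and the $2$-niceness inequality $2\tp(\P)+\Delta(\P)\ge \opt(\calG)$ (which was established in Proposition~\ref{prop:LU-nice} directly for arbitrary states $\P$) holds at every step of the dynamic.

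There is no real obstacle in this argument; the heavy lifting was already done in Proposition~\ref{prop:LU-nice}, which verified the perfect $2$-niceness. The corollary is therefore just an instantiation of the earlier theorems. If one preferred a fully self-contained proof, one would simply reproduce the argument of Theorem~\ref{TH_1}: at the state $\P$ where a player $i$ with maximum $\Delta_i(\P)$ moves, one has $\tp(\P')-\tp(\P)\ge \Delta_i(\P)\ge \Delta(\P)/n\ge (\opt(\calG)-2\tp(\P))/n$, and then apply Lemma~\ref{LM_1} with $a=n/2$ and $b=\opt(\calG)/n$; the $\alpha$-version substitutes $2+\alpha$ for $2$ via Lemma~\ref{lem_poa_alpha}'s key inequality $\Delta(\P)\le \alpha\,\tp(\P)$ at an $\alpha$-Nash equilibrium, yielding $a=n/(2+\alpha)$.
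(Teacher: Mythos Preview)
Your proposal is correct and matches the paper's approach exactly: the corollary is stated without proof in the paper, being an immediate consequence of Proposition~\ref{prop:LU-nice} (perfect $2$-niceness of Labor Union games) combined with Theorems~\ref{TH_1} and~\ref{TH_2}. Your discussion of the subtlety concerning the non-standard game structure is a welcome elaboration of a point the paper only gestures at.
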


Let $\O(\calG) = (O_1, \dots, O_m)$ 
be a state that maximizes the total profit in a game $\calG$, 
and let $\opt(\calG)=\tp(\O(\calG))$. 
As in the case of Fair Value games, it is not hard to see
that $\O(\calG)$ is a Nash equilibrium, i.e., $\pos(\calG)=1$.
In fact, for Labor Union games, we can prove a stronger statement.

\begin{proposition}\label{obs:PoS}
In any Labor Union game $\calG$,  
$\O(\calG)$ is a strong Nash equilibrium. 
I.e., the strong price of stability is 1.
\end{proposition}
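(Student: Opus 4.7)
The plan is to show that for any coalition $I \subseteq N$ jointly deviating from the optimal state $\O(\calG) = (O_1, \dots, O_m)$ to any alternative state $\O' = (O'_1, \dots, O'_m)$, the aggregate payoff of the deviators cannot strictly increase. This immediately rules out a profitable joint deviation in the strong Nash sense, since then any strict gain by one $i \in I$ must be matched by a strict loss of some other $j \in I$.

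First I would examine the non-deviators. A non-deviator $i \in O_j \setminus I$ keeps its party and its predecessors in $\O'$ are exactly $O_j(i) \setminus I$, a subset of its predecessors $O_j(i)$ in $\O$. By submodularity of $v_j$, the marginal contribution satisfies $u_i(\O) \le v_j((O_j(i) \setminus I) \cup \{i\}) - v_j(O_j(i) \setminus I)$. Summing these marginals over $i \in O_j \setminus I$ in their original relative order telescopes to $v_j(O_j \setminus I)$, giving the bound $\sum_{i \in O_j \setminus I} u_i(\O) \le v_j(O_j \setminus I)$.

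Combining this with the full payoff distribution identity $\sum_{i \in O_j} u_i(\O) = v_j(O_j)$ yields
\[
\sum_{i \in I} u_i(\O) \;\ge\; \sum_{j \in M} \bigl( v_j(O_j) - v_j(O_j \setminus I) \bigr) \;=\; \tp(\O) - \sum_{j \in M} v_j(O_j \setminus I).
\]
On the other hand, in $\O'$ the non-deviators $O_j \setminus I$ form a prefix of $O'_j$ in their original relative order (the deviators having been appended at the end of their chosen parties), so their payoffs telescope exactly to $v_j(O_j \setminus I)$, and hence $\sum_{i \in I} u_i(\O') = \tp(\O') - \sum_{j \in M} v_j(O_j \setminus I)$. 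Since $\tp(\O) = \opt(\calG) \ge \tp(\O')$ by the optimality of $\O(\calG)$, comparing the two expressions yields $\sum_{i \in I} u_i(\O') \le \sum_{i \in I} u_i(\O)$, completing the argument.

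The point requiring care is the arrival-order convention for joint deviations: specifically, that non-deviators retain their relative order and therefore occupy a prefix of $O'_j$, while deviators are appended afterward. Once this is made precise, everything reduces to a single application of submodularity, two telescoping identities (one on each side), and the optimality of $\O(\calG)$.
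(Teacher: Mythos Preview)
Your proof is correct and rests on the same ingredients as the paper's: the guaranteed payoff property for non-deviators (via submodularity), full payoff distribution, and optimality of $\O(\calG)$. The paper's version is more compact---it simply observes that a successful coalitional deviation would leave non-deviators no worse off and at least one deviator strictly better off, forcing the total profit strictly above $\opt(\calG)$---whereas you route the same facts through explicit bounds on the deviators' aggregate payoff.
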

\begin{proof}
Consider a deviating coalition $I\subseteq N$. By the guaranteed
payoff property, the deviation does not lower the payoff of all
players in $N\setminus I$ and increases the payoff of some of the
deviators, without harming the rest  of the deviators. Thus, the
deviation must lead to a state whose total payoff exceeds that of
$\O(\calG)$, a contradiction. \qed
\end{proof}
Furthermore, for Labor Union games we can show that for certain
dynamics and certain initial states one can guarantee convergence
to $\alpha$-Nash equilibrium or even Nash equilibrium.

\begin{proposition}\label{obs:fast}
Consider any Labor Union game $\calG=(N, v_1, \dots, v_m, M)$ 
such that $v_j(I)\ge 1$ for any $j\in M$ and any $I\in 2^N\setminus\{\emptyset\}$.
For any such $\calG$, the $\alpha$-Nash dynamic 
starting from any state in which all players are affiliated with some party
converges to an $\alpha$-Nash equilibrium in $O(\frac{n}{\alpha} \log W)$ steps. 
where $W$ is the maximum payoff that any player can achieve.
\end{proposition}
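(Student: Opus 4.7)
The plan is to bound the number of moves each player $i$ makes by $O(\tfrac{1}{\alpha}\log W)$ and sum over the $n$ players; once no player has an $\alpha$-improvement left, the state is by definition an $\alpha$-Nash equilibrium. Two ingredients drive the argument: payoff monotonicity, which is the guaranteed-payoff property noted in Section~\ref{sec:LU}, and the multiplicative jump built into the definition of an $\alpha$-improvement move. Guaranteed payoff gives monotonicity: while $i$ remains in party $j_i$, any move by another player either leaves $P_{j_i}(i)$ untouched (a successor joining, a successor leaving, or a move outside $j_i$) or removes a predecessor of $i$, in which case submodularity of $v_{j_i}$ makes $i$'s marginal $v_{j_i}(P_{j_i}(i)\cup\{i\})-v_{j_i}(P_{j_i}(i))$ weakly larger.

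Combining these, if $i$ moves at times $t_1 < t_2 < \cdots < t_{k_i}$ and $u_i^{(s)}$ denotes her payoff immediately after $t_s$, monotonicity forces the payoff just before $t_{s+1}$ to be at least $u_i^{(s)}$, so the $\alpha$-improvement condition yields $u_i^{(s+1)} > (1+\alpha)\,u_i^{(s)}$. Iterating gives $u_i^{(k_i)} > (1+\alpha)^{\,k_i-1}\,u_i^{(1)}$, and combined with the uniform upper bound $u_i^{(k_i)} \le W$ and the estimate $\log(1+\alpha)=\Theta(\alpha)$ for small $\alpha$ this collapses to $k_i = 1 + O\!\bigl(\tfrac{1}{\alpha}\log(W/u_i^{(1)})\bigr)$. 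Once I secure $u_i^{(1)} \ge 1$ for every player who ever moves, each $k_i$ is $O(\tfrac{1}{\alpha}\log W)$ and summing over players gives the announced total.

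The main obstacle is precisely the lower bound $u_i^{(1)} \ge 1$, and this is where the hypothesis $v_j(I)\ge 1$ for every non-empty $I$ must pull its weight. Submodularity only delivers the wrong-direction bound $u_i^{(1)} \le v_{j'}(\{i\})$, so one argues the lower bound by cases on the target party $j'$ of $i$'s first move. When $j'$ is empty before $i$ arrives, $u_i^{(1)} = v_{j'}(\{i\}) \ge 1$ is immediate. When $i$ joins a non-empty party, one uses the hypothesis to rule out marginals in $(0,1)$: a strictly positive marginal of a function whose non-zero values are all at least $1$ must itself be at least $1$ (this reading is cleanest when the $v_j$ are taken integer-valued, so that positive marginals lie in $\{1,2,\dots\}$). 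With this floor in place, the rest of the proof is routine geometric-series accounting.
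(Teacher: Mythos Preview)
Your overall approach is the paper's: use the guaranteed-payoff property for monotonicity of each player's payoff, note that every $\alpha$-best-response multiplies the mover's payoff by at least $1+\alpha$, and bound each player's number of moves by $O(\tfrac{1}{\alpha}\log W)$ given a unit floor on her payoff. The paper's three-line proof simply asserts ``if a player starts with a payoff of at least $1$'' and proceeds from there.

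The floor is indeed the only non-routine step, and your argument for it does not go through. The claim that a strictly positive marginal of a monotone function whose non-empty values are all at least $1$ must itself be at least $1$ is false: take $v(\{1\})=v(\{2\})=1$, $v(\{1,2\})=1.5$; this is monotone, submodular, satisfies the hypothesis, and the marginal of player~$2$ over $\{1\}$ is $0.5$. More concretely, with four players split two-and-two between two parties one can choose submodular $v_1,v_2$ with all non-empty values at least $1$ so that some player's unique $\alpha$-best-response move takes her payoff from $0.1$ to $0.2$; then $u_i^{(1)}=0.2<1$ and the bound $k_i=O(\tfrac{1}{\alpha}\log(W/u_i^{(1)}))$ does not collapse to $O(\tfrac{1}{\alpha}\log W)$. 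Your parenthetical retreat to integer-valued $v_j$ is exactly the extra hypothesis that would rescue the step, but it is not part of the stated proposition. So you have correctly isolated the crux and correctly recognised that the hypothesis $v_j(I)\ge 1$ is where the floor is supposed to come from, but the deduction you draw from it is invalid; the paper's own proof glosses over the same point rather than resolving it.
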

\begin{proof}
After each move in the $\alpha$-Nash dynamic, a player improves her
payoff by a factor of $1 + \alpha$, and the guaranteed payoff
property ensures that payoffs of other players are unaffected. So,
if a player starts with a payoff of at least $1$, she will reach a
payoff of $W$ after $O(\frac{\log W}{\alpha})$ steps. Therefore, in
$O(\frac{n}{\alpha} \log W)$ steps, we are guaranteed to reach an
$\alpha$-Nash equilibrium. \qed
\end{proof}

\begin{proposition}
Suppose a Labor Union game $\calG$ with $n$ players
starts at a state in which every 
player is unaffiliated. Then, in exactly $n$ steps of the Nash dynamic, 
the system will reach a Nash equilibrium. 
\end{proposition}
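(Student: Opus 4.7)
The plan is to show by induction on the step number $k$ the following invariant: after $k$ steps of the Nash dynamic starting from the state in which every player is unaffiliated, (i) no player has ever left a party once joined, and (ii) every player who has already made a move is currently in best response.

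For the base case $k = 0$ the invariant is vacuous. For the inductive step, assume the invariant holds after step $k - 1$. By (ii), no player who has already moved has a strictly improving best response available, so the player who moves at step $k$ must be some currently unaffiliated player $i$; the move takes $i$ from unaffiliated (payoff $0$) to a party $j$, with $i$ appended at the end of the order $P_j$. Invariant (i) is immediate.

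For invariant (ii), consider any previously moved player $i'$ currently sitting in party $j'$. The payoff $u_{i'}(\P) = v_{j'}(P_{j'}(i') \cup \{i'\}) - v_{j'}(P_{j'}(i'))$ depends only on the set of players preceding $i'$ in $P_{j'}$. This set is unaffected by $i$'s move, since if $j = j'$ then $i$ is appended after $i'$, and if $j \neq j'$ then $P_{j'}$ is not touched at all; hence $u_{i'}$ is unchanged. On the other hand, the alternative payoff that $i'$ would obtain by switching to some other party $j''$ equals $v_{j''}(P_{j''} \cup \{i'\}) - v_{j''}(P_{j''})$, and by (i) the current $P_{j''}$ contains (as a set) everything that $P_{j''}$ did when $i'$ last verified its best response; submodularity of $v_{j''}$ then implies that this alternative marginal contribution has only weakened. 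Therefore $j'$ is still a best response for $i'$, and (ii) is preserved.

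Since no player ever moves twice, the dynamic performs at most $n$ moves, and once every player has taken its turn, invariant (ii) applies to all $n$ players, yielding a Nash equilibrium. The main subtlety is the case in which the new mover $i$ joins the same party as a previously moved player $i'$; the key observation is that the labor-union rule of appending at the end of $P_j$ keeps $i'$'s current payoff fixed while submodularity ensures every outside option weakly worsens, so $i'$ retains no incentive to re-deviate. The proof that every player in fact does move (so that the bound $n$ is tight, rather than just an upper bound) follows because an unaffiliated player earns $0$ and, absent a best response equal to remaining unaffiliated, will be selected by the Nash dynamic as long as some improvement is still available.
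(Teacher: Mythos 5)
Your proof is correct and follows essentially the same route as the paper's: induction on the number of steps, showing that each mover must be a still-unaffiliated player and that, by monotonicity and submodularity, a player who has made its best-response move never regains an incentive to deviate. Your version is in fact somewhat more careful than the paper's, since you explicitly isolate the two invariants (no one ever leaves a party; every mover remains in best response) and spell out why a newcomer joining the same party cannot disturb an earlier member's payoff.
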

\begin{proof}
The proof is by induction on the number of steps. The very first
player who gets to move will pick the party that maximizes her
payoff. Subsequently, she will never have an incentive to move,
because no move will give her any improvement in her payoff. For the
inductive step, suppose that $k-1$ steps have elapsed, and exactly
$k-1$ players have moved once each and have reached their final
destination with no incentive to move again. The player who moves at
step $k$ chooses his best response party. Since the profit functions
are increasing and submodular, he cannot improve his payoff by
moving to another party at a later step.  Therefore, in $n$ steps,
the system reaches a Nash equilibrium. \qed
\end{proof}

We conclude with an important open question. We have shown that for
$\alpha>0$, the $\alpha$-Nash dynamic leads to an $\alpha$-Nash
equilibrium in $O(\frac{n}{\alpha} \log W)$ steps. However, we do
not know whether there exists a dynamic that converges to a Nash
equilibrium in a number of steps that is a polynomial in $n$ and
$\log W$.

\subsection{Shapley games}\label{sec:SH}
In our third class of games, which we call {\em Shapley games}, the
players' payoffs are determined in a way that is inspired by the
definition of the Shapley value~\cite{shap53}. Like in Fair Value
games, a state of a Shapley game is fully described by the partition
of the players into parties. Given a state $S=(Q_1, \dots, Q_m)$ and
a player $i\in Q_j$, we define player $i$'s payoff as 
$$
u_i(S)=\sum_{Q\subseteq Q_j\setminus\{i\}}\frac{|Q|!(|Q_j|-|Q|-1)!}{|Q_j|!}
(v_j(Q\cup \{i\})-v_j(Q)).
$$
Intuitively, the payment to each player can be viewed as his average
payment in the Labor Union model, where the average is taken over
all possible orderings of the players in the party. This immediately
implies $\sum_{i\in Q_j}u_i(S)=v_j(Q_j)$. Thus, Shapley
games share features with both the Fair Value games and the Labor
Union games. Like Fair Value
games, the order in which the players join the party is unimportant. 
Moreover, if all payoff functions are additive, i.e., 
we have $u_i(S\cup\{j\})-u_i(S)=u_i(\{j\})$ for any $i\in N$ and any 
$S\subseteq N\setminus\{i\}$, then the respective Shapley game
coincides with the Fair Value game that corresponds to $(N, v_1, \dots, v_m, M)$. 
On the other hand, similarly to the Labor Union games,
the entire profit of each party is distributed among its members.
We will first show that any Shapley game is an exact potential
game and hence admits a Nash equilibrium in pure strategies
(all proofs in this section are deferred to Appendix~\ref{app:SH}).

\begin{theorem}\label{shapley-potential}
Any Shapley game $\calG=(N, v_1, \dots, v_m, M)$, 
is an exact potential game with the potential function given by
$$
\Phi(S)=\sum_{j\in M}\sum_{Q\subseteq Q_j}
\frac{(|Q|-1)!(|Q_j|-|Q|)!}{|Q_j|!}v_j(Q).
$$
\end{theorem}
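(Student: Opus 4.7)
The plan is to exhibit $\Phi$ as a sum over parties of the classical Hart--Mas-Colell potential for each party's local cooperative game, and then invoke the identity that the marginal change of this potential when adding a single player equals that player's Shapley value. Concretely, for a subset $Q\subseteq N$ and a party $j\in M$, I would define
$$
\phi_j(Q)=\sum_{\emptyset\neq R\subseteq Q}\frac{(|R|-1)!(|Q|-|R|)!}{|Q|!}\,v_j(R),
$$
so that $\Phi(S)=\sum_{j\in M}\phi_j(Q_j)$ (the $R=\emptyset$ terms drop since $v_j(\emptyset)=0$). When a single player $i$ moves from party $j$ (with $i\in Q_j$) to a different party $k$, only the coalition sets $Q_j$ and $Q_k$ change, so only $\phi_j(Q_j)$ and $\phi_k(Q_k)$ are affected in the sum defining $\Phi$.

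The heart of the argument is the Hart--Mas-Colell identity: for every nonempty $Q\subseteq N$ and every $i\in Q$,
$$
\phi_j(Q)-\phi_j(Q\setminus\{i\})=\sum_{R\subseteq Q\setminus\{i\}}\frac{|R|!(|Q|-|R|-1)!}{|Q|!}\bigl(v_j(R\cup\{i\})-v_j(R)\bigr).
$$
Comparing with the definition of the Shapley-game payoff, the right-hand side is precisely the payoff that $i$ would receive in any state whose party-$j$ coalition is $Q$ (with $i\in Q$). Applying this identity to $Q_j$ (before the move) and to $Q_k\cup\{i\}$ (after the move), I obtain
$$
\Phi(S')-\Phi(S)=\bigl(\phi_k(Q_k\cup\{i\})-\phi_k(Q_k)\bigr)-\bigl(\phi_j(Q_j)-\phi_j(Q_j\setminus\{i\})\bigr)=u_i(S')-u_i(S),
$$
which is exactly the exact-potential condition required by the theorem.

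The main obstacle is establishing the combinatorial identity above. I would prove it by expanding $\phi_j(Q)-\phi_j(Q\setminus\{i\})$ and grouping terms by whether $i\in R$ or not. For subsets $R\ni i$, writing $R=R'\cup\{i\}$ with $R'\subseteq Q\setminus\{i\}$, the coefficient becomes $\tfrac{|R'|!(|Q|-|R'|-1)!}{|Q|!}$, matching the $+v_j(R'\cup\{i\})$ term on the target side. For subsets $R\subseteq Q\setminus\{i\}$, the coefficient difference telescopes using
$$
\frac{(|R|-1)!(|Q|-|R|)!}{|Q|!}-\frac{(|R|-1)!(|Q|-|R|-1)!}{(|Q|-1)!}=-\frac{|R|!(|Q|-|R|-1)!}{|Q|!},
$$
producing exactly the $-v_j(R)$ contribution. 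Summing the two groups of terms delivers the identity and, via the argument above, completes the proof of the theorem. Notably, the proof nowhere uses submodularity of the $v_j$, so the potential structure is a purely algebraic property of the Shapley-value payoff rule.
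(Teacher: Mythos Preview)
Your proposal is correct and follows essentially the same computation as the paper's proof: both isolate the two affected parties, split the relevant sums according to whether the subset contains the moving player $i$, and use the same coefficient identity $\frac{(|R|-1)!(|Q|-|R|)!}{|Q|!}-\frac{(|R|-1)!(|Q|-|R|-1)!}{(|Q|-1)!}=-\frac{|R|!(|Q|-|R|-1)!}{|Q|!}$ to collapse the difference into the Shapley payoff. Your framing via the Hart--Mas-Colell potential is a clean conceptual wrapper that the paper does not make explicit, but the underlying algebra is identical.
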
	
Just like in other profit-sharing games,  
the price of anarchy in Shapley games is bounded by $2$.

\begin{theorem}\label{shapley-poa} 
In any Shapley game $\calG=(N, v_1, \dots, v_m, M)$ with $|N|=n$, we
have $\poa(\calG)\le 2-\frac{1}{n}$.
\end{theorem}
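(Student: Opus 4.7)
Let $S = (Q_1, \dots, Q_m)$ be any Nash equilibrium and $S^* = (Q_1^*, \dots, Q_m^*)$ a profit-maximizing state. The plan is to adapt the PoA argument for Fair Value games (Theorem~\ref{th_2}), exploiting two features of the Shapley payoff scheme to squeeze out the extra $1/n$: (a) the Shapley value of a deviating player is an average of marginal contributions whose ``first position'' term carries weight $1/(|Q_k|+1) \ge 1/n$ and picks up the marginal $v_k(\{i\})$, which is the largest by submodularity; and (b) incumbent Shapley payoffs similarly contain a $v_k(\{j\})/|Q_k|$ summand, giving $u_j(S) \ge v_k(\{j\})/n$.

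For each $i \in Q_k^* \setminus Q_k$ with $k = s_i^*$, separating off the first-position term and using submodularity $v_k(\{i\}) \ge v_k(Q_k\cup\{i\}) - v_k(Q_k)$ together with $|Q_k|+1 \le n$ yields
\[
u_i(S_{-i},k) \;\ge\; \frac{v_k(\{i\})}{n} + \frac{n-1}{n}\bigl(v_k(Q_k\cup\{i\}) - v_k(Q_k)\bigr).
\]
Applying the Nash condition, summing over $i \in Q_k^* \setminus Q_k$, invoking Lemma~\ref{lem_sub} with $Y = \emptyset$ (lower-bounding $\sum v_k(\{i\})$ by $v_k(Q_k^*\setminus Q_k)$) and with $Y = Q_k$ (lower-bounding the sum of marginals by $v_k(Q_k \cup Q_k^*)-v_k(Q_k)$), then monotonicity and subadditivity $v_k(Q_k^*\setminus Q_k) \ge v_k(Q_k^*) - v_k(Q_k^*\cap Q_k)$, and finally summing over $k$, I obtain
\[
\sum_{i\in W} u_i(S) \;\ge\; \opt - \frac{X}{n} - \frac{n-1}{n}\tp(S),
\]
where $W = \{i : s_i \ne s_i^*\}$ and $X = \sum_k v_k(Q_k^* \cap Q_k)$. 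Using $\sum_{i \in N} u_i(S) = \tp(S)$ (Shapley distributes each party's profit fully) to substitute $\sum_{i \in W} u_i(S) = \tp(S) - Z$ with $Z = \sum_{i \in R} u_i(S)$ and $R = N\setminus W$, rearranging gives $(2n-1)\tp(S) \ge n\cdot\opt + nZ - X$.

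The critical last step is to prove $nZ \ge X$, which will immediately yield $\opt \le (2 - 1/n)\tp(S)$. For each $j \in R_k := Q_k \cap Q_k^*$, the Shapley value $\phi_j(Q_k, v_k)$ is at least $v_k(\{j\})/|Q_k| \ge v_k(\{j\})/n$, because permutations that place $j$ first contribute exactly $v_k(\{j\})$ with total weight $1/|Q_k|$, and all other marginal contributions are nonnegative by monotonicity. Summing over $j \in R_k$ and applying Lemma~\ref{lem_sub} with $Y = \emptyset$ gives $\sum_{j \in R_k} u_j(S) \ge v_k(R_k)/n$; summing over $k$ then gives $Z \ge X/n$ as needed. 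The main obstacle is keeping track of the two $1/n$ terms---the $v_k(\{i\})/n$ summand on the deviation side and the inequality $nZ \ge X$ on the incumbent side---and ensuring they propagate cleanly through the nested Lemma~\ref{lem_sub} applications. Both rest on the same structural observation that Shapley value places at least $1/n$ weight on the ``first position'' of a random ordering, which is the source of the $1/n$ improvement over the bound of $2$ for Fair Value games.
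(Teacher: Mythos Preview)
Your proof is correct and takes a genuinely different route from the paper's.

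The paper keeps the full Shapley expansion intact: for every $i\in Q_j^*$ it lower-bounds $u_i(S)$ by a sum $\sum_{Q\subseteq Q_j}\frac{|Q|!(|Q_j|-|Q|)!}{(|Q_j|+1)!}\bigl(v_j(Q\cup\{i\})-v_j(Q)\bigr)$, then swaps the order of summation and applies Lemma~\ref{lem_sub} inside the $Q$-sum. The $1/n$ gain is extracted at the very end, from the observation that the $Q=\emptyset$ term contributes $v_j(\emptyset)=0$ to the subtracted side, so that side is at most $\bigl(1-\tfrac{1}{|Q_j|+1}\bigr)v_j(Q_j)\le(1-\tfrac{1}{n})v_j(Q_j)$; this last step forces a separate treatment of the degenerate case $|Q_j|=n$.

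You instead collapse the Shapley value immediately to the two-term lower bound $\tfrac{1}{n}v_k(\{i\})+\tfrac{n-1}{n}\bigl(v_k(Q_k\cup\{i\})-v_k(Q_k)\bigr)$, and then run what is essentially the Fair-Value argument of Theorem~\ref{th_2} on the $\tfrac{n-1}{n}$ part, while the $\tfrac{1}{n}v_k(\{i\})$ bonus is handled by subadditivity and the auxiliary inequality $nZ\ge X$. The latter is again a first-position observation, now applied to incumbents rather than deviators. Your approach is more elementary in that it never needs the full weight identity $\sum_{Q\subseteq Q_j}\frac{|Q|!(|Q_j|-|Q|)!}{(|Q_j|+1)!}=1$, it makes the relationship to the Fair-Value bound transparent, and it avoids the $|Q_j|=n$ case analysis (since $i\notin Q_k$ already gives $|Q_k|+1\le n$). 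The paper's approach, on the other hand, treats all players in $Q_j^*$ uniformly and does not require the separate $W$/$R$ bookkeeping or the closing lemma $nZ\ge X$.
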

The following claim shows that the bound given in Theorem~\ref{shapley-poa} is almost tight.

\begin{proposition}\label{prop:shapley-poa}
For any $n\ge 3$, 
there exists a Shapley game $\calG=(N, v_1, v_2, M)$ with $|N|=n$ and $|M|=2$
such that $\poa(\calG)=2-\frac{2}{n+1}$ and 
$\pos(\calG)=2-\frac{4}{n+1}$.
\end{proposition}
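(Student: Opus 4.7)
The plan is to exhibit an explicit Shapley game on $n$ players with two parties and compute all its Nash equilibria and the optimum directly. The natural starting point is to take $v_1(Q)=|Q|$ to be the modular (additive) function, which forces every player's Shapley payoff in party $1$ to be exactly $1$ in any state. This drastically simplifies the Nash conditions: a configuration $(Q_1,Q_2)$ is a Nash equilibrium iff every player in $Q_2$ has Shapley payoff at least $1$ and no player in $Q_1$ would receive payoff strictly greater than $1$ by joining $Q_2$.

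For $v_2$ I would take a monotone submodular function built to treat one distinguished player, say player $1$, asymmetrically, so that the game admits two Nash equilibria with different total profits. A convenient three-parameter family is to set $v_2(\{1\})=\alpha$, $v_2(Q)=\gamma$ whenever $1\in Q$ and $|Q|\ge 2$, $v_2(Q)=\beta$ whenever $Q\neq\emptyset$ and $1\notin Q$, and $v_2(\emptyset)=0$; one checks submodularity and monotonicity by imposing $0\le\alpha,\beta\le\gamma$ and $\alpha+\beta\ge\gamma$. Using the Shapley formula one then obtains closed-form payoffs in the four relevant configurations (for $k\ge 2$ and $1\in Q_2$ these reduce to $\phi_1=[\alpha+(k-1)(\gamma-\beta)]/k$ and $\phi_j=\beta/k+(\gamma-\alpha)/(k(k-1))$ for $j\neq 1$, and to $\beta/k$ when $1\notin Q_2$), whose sums agree with $v_2(Q_2)$ as a sanity check.

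The verification then proceeds in three steps. First, for each candidate configuration (parameterized by $|Q_2|$ and whether $1\in Q_2$), write out the Nash inequalities using the formulas above, and calibrate $\alpha,\beta,\gamma$ so that exactly two states are equilibria: the symmetric state ``all $n$ players in party $2$'' giving total profit $\gamma$, and one asymmetric state with smaller $|Q_2|$ giving a strictly larger total profit. Second, compute $\opt(\calG)$ by maximizing $|Q_1|+v_2(Q_2)$ over all bipartitions; for the family above the maximizer is a non-equilibrium configuration in which $Q_2$ is a proper nonempty subset containing $1$. Third, plug the three numbers into the definitions of $\poa$ and $\pos$ and solve the two resulting equations for the parameters so that the ratios are exactly $\tfrac{2n}{n+1}$ and $\tfrac{2n-2}{n+1}$.

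The main obstacle is to exclude \emph{additional} Nash equilibria whose profit would lie strictly between the intended worst and best equilibria, which would spoil either the PoA or the PoS equality. This requires checking, for every size $k\in\{0,\ldots,n\}$ and each placement of player $1$, that at least one of the deviation inequalities fails strictly at the chosen parameters, together with a case analysis ensuring that no player in party $1$ ever profits from joining party $2$ except in the two designated equilibria. Once all $O(n)$ cases are ruled out, the claimed values of $\poa(\calG)$ and $\pos(\calG)$ follow from the explicit profit computations.
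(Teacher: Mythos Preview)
Your plan is in the right spirit (exhibit an explicit two-party game and enumerate its equilibria), but the specific family you propose does not close. The paper takes the dual route: it keeps $v_2$ as simple as possible, namely $v_2(Q)=1$ for all nonempty $Q$, and puts the asymmetry into the \emph{additive} function $v_1$ by setting $v_1(\{1\})=\tfrac{1}{n}$ and $v_1(\{i\})=\tfrac{1}{n-1}$ for $i\ge 2$. Then every Shapley payoff is either $v_1(\{i\})$ or $1/|Q_2|$, the Nash equilibria are read off in two lines, and both ratios fall out immediately.

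The gap in your plan is the calibration step, not the case analysis you flag as ``the main obstacle.'' If ``all $n$ players in party $2$'' is to be a Nash equilibrium, then every player's Shapley share there is at least $1$, so $\gamma=v_2(N)\ge n$. With $\poa=\tfrac{2n}{n+1}$ this forces $\opt=\tfrac{2n}{n+1}\gamma$, and with $\pos=\tfrac{2n-2}{n+1}$ the best-equilibrium profit must equal $P=\tfrac{n}{n-1}\gamma$. Now run through the possible profits of states in your family. A state with $1\in Q_2$ and $|Q_2|=k\ge 2$ has profit $(n-k)+\gamma$; setting this equal to $P$ gives $\gamma=(n-k)(n-1)$, but monotonicity $\alpha\le\gamma$ together with the $\poa$ equation forces $\gamma\le n+1$, and for $n\ge 4$ there is no integer $k$ with $(n-k)(n-1)\in[n,n+1]$. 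A best equilibrium with $1\notin Q_2$ requires that player $1$ not want to join $Q_2$, i.e.\ $[\alpha+k(\gamma-\beta)]/(k+1)\le 1$; combining this with condition (B) for ``all in party $2$'' to be an equilibrium and with the monotonicity constraint $\alpha\le\gamma$ yields, after eliminating $\alpha$ and $\beta$, inconsistent bounds on $\gamma$ for every $k$ (for instance, at $k=n-1$ one is forced to $\gamma=\tfrac{n(n+1)}{n-1}>n+1$). Finally, $P=(n-1)+\alpha$ would mean the optimum is itself an equilibrium, giving $\pos=1$, which matches $2-\tfrac{4}{n+1}$ only at $n=3$.

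So your three parameters are not enough once you pin the worst equilibrium at $Q_2=N$. The quickest fix is to swap the roles of the two parties as the paper does: make $v_2$ constant and put the asymmetry into the additive side.
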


\section{Cut Games and Profit Sharing Games}\label{sec:cut}
We will now describe a family of succinctly representable
profit-sharing games that can be described in terms of undirected
weighted graphs. It turns out that while two well-studied 
classes of games on such graphs do not induce profit-sharing
games, a ``hybrid'' approach does. We then explain
how to compute players' payoffs in the resulting profit-sharing games.

In the classic {\em cut games}~\cite{Yann91,FabrikantPT04,ChristodoulouMS06}, 
players are the vertices of a weighted graph $G=(N,E)$. The state of the game 
is a partition of players into two parties, and the payoff of each player is the sum of the 
weights of cut edges that are incident on him. 
A cut game naturally corresponds to a coalitional game with the set of players $N$, 
where the value of a coalition $S\subseteq N$ equals to the weight of the cut
induced by $S$ and $N \setminus S$. However, this game is not monotone, 
so it does not induce a profit-sharing game, as defined in 
Section~\ref{sec:main}. 

In {\em induced subgraph games}~\cite{dp94}, the value of a coaliton $S$
equals to the total weight of all edges that have both endpoints in $S$;
while these games are monotone, they are not convex.

Finally, consider a game where the value of a coalition $S\subseteq N$
equals the total weight of all edges incident on vertices in $S$, 
i.e., both internal edges of $S$ (as in induced subgraph games)
and the edges leaving $S$ (as in cut games).
It is not hard to see that this game is both monotone and convex, 
and hence induces a profit-sharing game as described in Section~\ref{sec:main}.
We will now explain how to compute players' 
payoffs in the corresponding Fair Value games, Labor Union games and Shapley games, 
using Figure~\ref{fig:EdgeInterpretation}.
In this figure, we are given a state of the game with two parties $S$
and $N\setminus S$; the players are listed  
from top to bottom in the order in which they (last) entered each party. (The order is 
relevant only in Labor Union games.)  $A$ (resp., $B$) denotes the total weight of edges 
incident on $i$ that connect $i$ to a predecessor (resp., successor) within the party. $C$ is 
the total weight of the cut edges incident on $i$. One can interpret an edge $e=(i,i')$ with 
weight $w(e)$ as a skill or resource of value proportional to $w(e)$ that both $i$ and $i'$ possess. 

\begin{description}
\item[Fair Value Games:] 
The payoff of $i$ (see Figure~\ref{fig:EdgeInterpretation}) is given by $\frac{A+B}{2}+C$. 
Intuitively, the unique skills of a player are weighted more toward his payoff than his shared skills.
\item[Labor Union Games:] 
The payoff of $i$ is given by $B+C$. Intuitively, $i$'s payoff reflects the unique skills that $i$ 
possessed when he joined the party. Players who share skills with $i$, 
but join after $i$, will not get any payoff for those shared skills.
\item[Shapley Games:] One can show that $i$'s payoff is given by $\frac{A+B}{2}+C$, 
just as in Fair Value games.
%
%
\end{description}
One can see that this interpretation easily extends to multiple parties and hyperedges. We also note 
that many of the notions that we have discussed are naturally meaningful in this variant of the cut 
game: for instance, an optimal state for $m=2$ 
is a configuration in which the weighted cut size is maximized. 


\begin{figure}[h]
	\centering
		\includegraphics[height=1.5in]{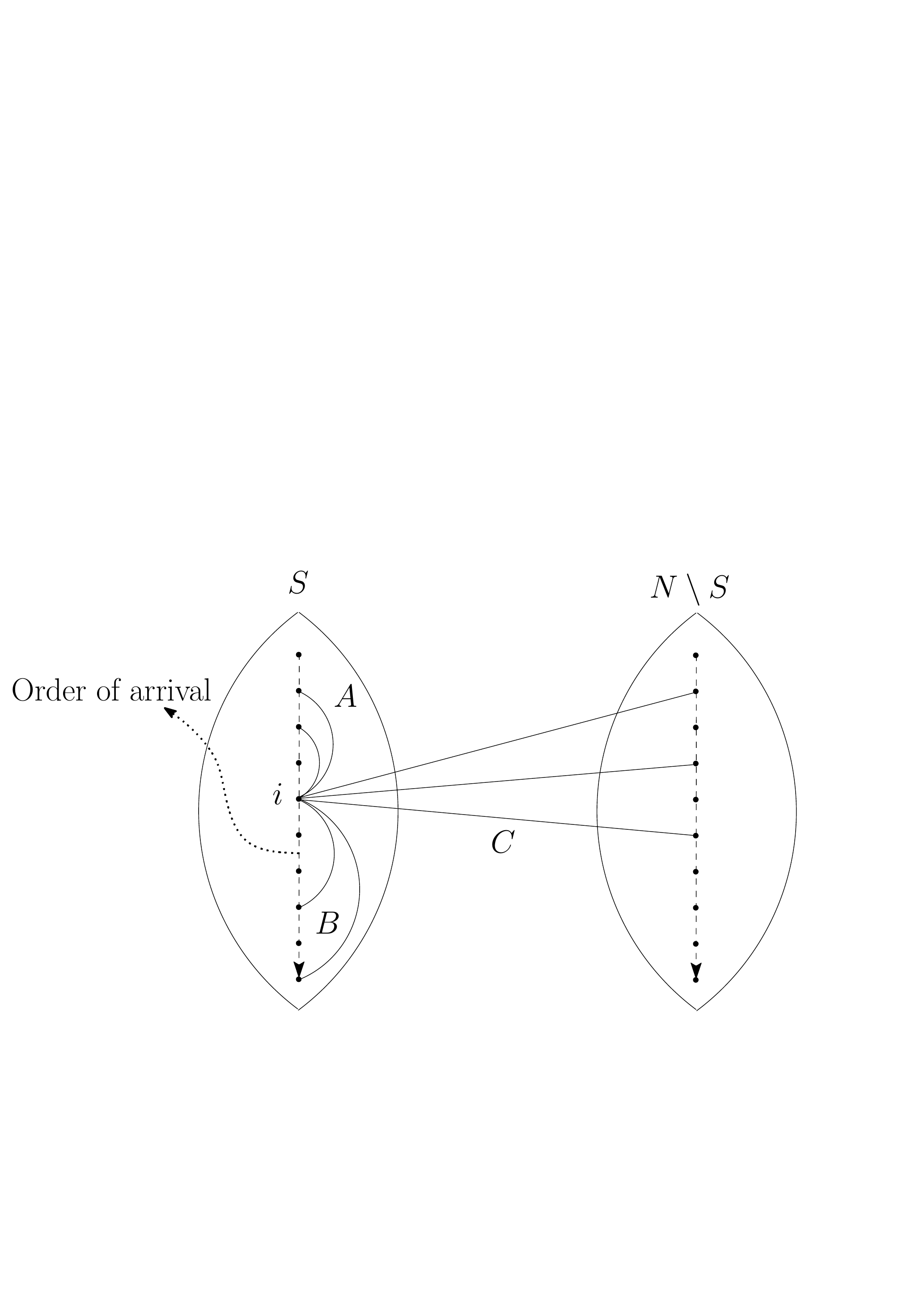}
	\caption{
The set $N$ of players is partitioned into parties $S$ and $N \setminus S$. Consider a player $i$. $A$ (resp., $B$) denotes the total weight of edges incident on $i$ and connecting $i$ to a predecessor (resp., successor) within the party. $C$ is the total weight of the cut edges incident on $i$.}
	\label{fig:EdgeInterpretation}
\end{figure}

\section{Conclusions and Future Work}\label{sec:concl}
In this paper, we studied the dynamics of coalition formation under marginal
contribution-based profit division schemes. 
We have introduced three classes of non-cooperative games that can be constructed
from any convex cooperative game. We have shown that all three profit distribution
schemes considered in this paper have desirable properties: all three games admit 
a Nash equilibrium, and even the worst Nash equilibrium is within a factor of $2$
from the optimal configuration. In addition, for Fair Value games and Labor Union games
a natural dynamic process quickly converges to a state with a fairly high total 
profit. Thus, when rules for sharing the payoff are fixed in advance, 
we can expect a system composed of bounded-rational selfish players to quickly converge to
an acceptable set of teams.

Of course, the picture given by our results is far from complete; 
rather, our work should be seen as
a first step towards understanding the behavior of myopic selfish agents in coaliton
formation settings. In particular, our results seem to suggest that keeping track
of the history of the game and distributing payoffs in a way that respects
players ``seniority'' leads to better stability properties; it would be interesting
to see if this observation is true in practice, and whether it generalizes
to other settings, such as congestion games.

In contrast to the previous work on 
cost-sharing and profit-sharing games, 
our work does not assume that the game's payoffs are given by an
underlying combinatorial structure. Rather, 
our results hold for any convex cooperative game, 
and, in particular, do not depend on whether it is compactly representable. Further, 
all of our results are non-computational in nature. Indeed,  
since the standard representation of cooperative games is exponential in the number
of players, one can only hope to obtain meaningful complexity results 
for subclasses of cooperative games that possess a succinct
representation; identifying such classes and proving complexity results
for them is a promising research direction. 

In our study of Labor Union games, we took a somewhat unusual modeling 
approach: we considered a system described by a sequence of states, each of which induces a 
non-cooperative game, and proved convergence results about the dynamics of such systems.
This approach can be extended to other classes of games such as, e.g., congestion
games; indeed, there are real-life systems where a player's payoff depends
on who selected a certain resource before him. It would be interesting to see
if the known results for congestion games extend to this setting.
  
\newcommand{\etalchar}[1]{$^{#1}$}

\newpage

\appendix
\section{Proofs for Section~\ref{sec:beta}}\label{app:beta}
\subsection{Proof of Lemma~\ref{LM_1}}
From the hypothesis we have
$f(\bar{S}) - f(S) \geq b - \frac{1}{a}f(S)$.
Let $h(S) =  b - \frac{1}{a}f(S)$. Then
$$
h(S) - h(\bar{S}) = \frac{1}{a}( f(\bar{S}) - f(S))\ge \frac{1}{a}h(S).
$$
Hence,
\begin{eqnarray}
h(\bar{S}) \leq \left(1 - \frac{1}{a}\right) h(S). \label{T3}
\end{eqnarray}
Consider a state $S^F$ that is reached by the dynamic starting from
a state $S^I$ in $t$ steps.
By recursively applying~\eqref{T3}, we get
$$
h(S^F) \leq \left(1 - \frac{1}{a}\right)^t  h(S^I).
$$
By setting $t=\lceil a \ln  \frac{h(S^I)}{\epsilon b}\rceil  \leq  \lceil a
\ln\frac{1}{\epsilon}\rceil$ in the previous inequality, we derive that 
$h(S^F) \leq \epsilon b$. 
Thus we obtain 
$f(S^F) = ab\left( 1 - \frac{h(S^F)}{b} \right) 
\geq a b (1 - \epsilon)$.
\qed

\subsection{Proof of Theorem~\ref{TH_2}}
Let us consider a generic state $S=(s_1, \dots, s_n)$ 
of the dynamic. Let $U \subseteq N$ be 
the subset of players that can perform an $\alpha$-best-response 
move, and let  $E = N \setminus U$. Note that no
player $i\in E$ can improve his payoff by more than a factor of $1+\alpha$
by deviating from his current strategy, 
i.e., $\Delta_E(S)\le\alpha\sum_{i\in E}u_i(S)\le \alpha f(S)$.
By definition of a perfect $\beta$-nice game,
we have
$$
 \Delta_E(S) + \Delta_U(S) = \Delta(S)  \geq  \opt_f(\calG) - \beta \cdot f(S).
$$
Let $i$ be the player moving in state $S$, and let $\bar{S}$ be the state resulting
from the move of player $i \in U$. Since $i$ is the player with the maximum absolute
improvement among the players in $U$, we get
\begin{eqnarray*}
 f(\bar{S}) - f(S) & \geq &  \Phi(\bar{S}) - \Phi(S) \\
& \geq & \Delta_i(S) \\
 & \geq & \frac{\Delta_U(S)}{|U|}  \\
 & \geq &  \frac{\opt_f(\calG) - \beta\cdot f(S) - \Delta_E(S)}{n}\\
 & \geq & \frac{\opt_f(\calG) - \beta\cdot f(S) - \alpha \cdot f(S)}{n}\\
& = & \frac{\opt_f(\calG)}{n} - \frac{\beta + \alpha}{n}f(S).
\end{eqnarray*}
The theorem now follows by applying Lemma~\ref{LM_1} with 
$b=\frac{\opt_f(\calG)}{n}$ and $a=\frac{n}{\beta + \alpha}$.
\qed

\section{Proof of Theorem~\ref{th_2}}\label{app:FM}
It is easy to see that $\calG$ is an exact potential game, where 
the potential function is given by the total profit.  In order to prove the theorem, we need 
to show that for each state $S$ we have $2\cdot \tp(S)  +  \Delta(S)  \geq  \opt(\calG)$. 
Consider any state  $S=(s_1, s_2, \ldots, s_n)$, 
and let $S'=(s'_1, s'_2, \ldots, s'_n)$ be the 
state of best responses to $S$, that is, let $s'_i$ be the best 
response of player $i$ in state $S$. Moreover, 
let $S^*=(s^*_1, s^*_2, \ldots, s^*_n)$ be a state that maximizes the total profit. 
Consider a party $k \in M$, and let 
$Q_k=\{i\in N\mid s_i=k\}$, $Q^*_k=\{i\in N\mid s^*_i=k\}$.
We obtain
\begin{eqnarray}
\nonumber 
\Delta_{Q^*_k} (S)& = & 
\sum_{j\in Q^*_k}  \left(\pay_j(S_{-j},s'_j)-\pay_j (S)\right)  \\
& \geq & 
\sum_{j\in Q^*_k}  \left(\pay_j(S_{-j}, k)-\pay_j (S)\right) \label{th_2_1} \\
\nonumber & = & 
\sum_{j\in Q^*_k} \pay_j(S_{-j}, k)-\sum_{j \in Q^*_k} \pay_j (S)  \\
\nonumber & = & 
\sum_{j\in Q^*_k\setminus Q_k}  \left(v_k(Q_k \cup \{j\})-v_k(Q_k)\right) +     
\sum_{j\in Q^*_k \cap Q_k} \pay_j (S)-\sum_{j \in Q^*_k} \pay_j(S)  \\
& \geq &  
v_k(Q_k \cup (Q^*_k\setminus Q_k)) - v_k(Q_k) - \sum_{j\in Q^*_k}\pay_j (S) \label{th_2_2} \\
& \geq &  
v_k(Q^*_k)-v_k(Q_k)-\sum_{j \in Q^*_k} \pay_j(S),\label{th_2_3}
\end{eqnarray}
where~\eqref{th_2_1} holds because for each player $j$ the improvement 
from selecting the best 
response $s'_j$ is at least the improvement achieved by choosing the optimal strategy 
$s^*_j = k$, \eqref{th_2_2} follows from Lemma~\ref{lem_sub}, whereas~\eqref{th_2_3} holds 
because $v_k$ is non-decreasing.

By summing these inequalities over all parties $k$, we obtain
\begin{eqnarray}
\nonumber \Delta(S) = \sum_{k\in M}\Delta_{Q^*_k}(S)
& \geq &   \sum_{k \in M}  v_k(Q^*_k) - \sum_{k \in M} v_k(Q_k)  -    
\sum_{k\in M}  \sum_{j \in  Q^*_k} \pay_j (S)\\
\nonumber &  =  &   \tp(S^*) -  \tp(S) - \sum_{j \in N} \pay_j (S)\\
& \geq & \opt(\calG) - 2 \tp(S). \label{th_2_sol}
\end{eqnarray}
where \eqref{th_2_sol} follows from the fact that for every state $S$ we have 
$ \sum_{j \in N} \pay_j (S) \leq \tp(S)$.
\qed

\section{Proofs for Section~\ref{sec:SH}}\label{app:SH}

\subsection{Proof of Theorem~\ref{shapley-potential}}
Suppose that in some state $S=(Q_1, \dots, Q_m)$ of the game a player
$i$ that belongs to party $1$ wants to switch to party $2$.   
Let $S'$ be the state after player $i$ switches.
Our goal is to show that $u_i(S')-u_i(S)=\Phi(S')-\Phi(S)$,
so $\Phi$ is indeed a potential  function of the game.

We can compute the utility of player $i$ in both states,
taking into account that in state $S$ player $i$ belongs to party~$1$ with $|Q_1|$ members,
but in state $S'$ she belongs to party~$2$ with $|Q_2+1|$ members:
\begin{eqnarray*}
u_i(S')    &=& \sum_{Q\subseteq Q_2}
              \frac{|Q|!(|Q_2|-|Q|)!}{(|Q_2|+1)!}(v_2(Q\cup \{i\})-v_2(Q)), \\
u_i(S) &=& \sum_{Q\subseteq Q_1\setminus\{i\}}
              \frac{|Q|!(|Q_1|-|Q|-1)!}{|Q_1|!}(v_1(Q\cup \{i\})-v_1(Q)).
\end{eqnarray*}
The only parties whose composition changes as we move from state $S$
to state $S'$ are party $1$ and party $2$. Therefore, when 
computing the difference between $\Phi(S')$ and $\Phi(S)$,
we can ignore all other parties:
\begin{eqnarray*}
&&\Phi(S')-\Phi(S)  =
\sum_{Q\subseteq Q_1\setminus \{i\}}
      \frac{(|Q|-1)!(|Q_1|-1-|Q|)!}{(|Q_1|-1)!}v_1(Q) 
      \\
      && \quad +
\sum_{Q\subseteq Q_2\cup \{i\}}
      \frac{(|Q|-1)!(|Q_2|+1-|Q|)!}{(|Q_2|+1)!}v_2(Q)\\
&& \quad - \sum_{Q\subseteq Q_1}
      \frac{(|Q|-1)!(|Q_1|-|Q|)!}{|Q_1|!}v_1(Q)
      \\
      && \quad -
     \sum_{Q\subseteq Q_2}
      \frac{(|Q|-1)!(|Q_2|-|Q|)!}{|Q_2|!}v_2(Q)\\
& = &\sum_{Q\subseteq Q_2}
      \left(\left(\frac{(|Q|-1)!(|Q_2|+1-|Q|)!}{(|Q_2|+1)!}-
      \frac{(|Q|-1)!(|Q_2|-|Q|)!}{|Q_2|!}\right)v_2(Q) \right. \\ 
      && \quad \quad \quad \quad + \left.
      \frac{(|Q|)!(|Q_2|-|Q|)!}{(|Q_2|+1)!}v_2(Q\cup \{i\})\right)\\
&& \quad + \sum_{Q\subseteq Q_1\setminus\{i\}}
      \left(\left(\frac{(|Q|-1)!(|Q_1|-1-|Q|)!}{(|Q_1|-1)!} - \frac{(|Q|-1)!(|Q_1|-|Q|)!}{|Q_1|!}\right)v_1(Q) \right. \\
      &&  \quad \quad \quad \quad - \left.
       \frac{(|Q|)!(|Q_1|-|Q|-1)!}{|Q_1|!}v_1(Q\cup \{i\})\right)\\
& = &\sum_{Q\subseteq Q_2}
          \frac{|Q|!(|Q_2|-|Q|)!}{(|Q_2|+1)!}(v_2(Q\cup \{i\})-v_2(Q))\\
          && \quad -
     \sum_{Q\subseteq Q_1\setminus\{i\}}
          \frac{|Q|!(|Q_1|-|Q|-1)!}{|Q_1|!}(v_1(Q\cup\{i\})-v_1(Q))\\
& = &u_i(S')-u_i(S).
\end{eqnarray*}
\qed

\subsection{Proof of Theorem~\ref{shapley-poa}}
Let $S=(Q_1, \dots, Q_m)$ be a Nash equilibrium state, and let $S^*=(Q_1^*, \dots, Q^*_m)$ 
be a state where the maximum total profit is achieved. 
It suffices to show that $(1-\frac{1}{n})\tp(S) + \tp(S)\geq \tp(S^*)$.

Observe first that if $|Q_j|=n$ for some $j\in M$, then $S$ 
is an optimal state. Indeed, if $S$ is not optimal, 
by the total payoff distribution 
property there exists a party $k\in M$ 
and a player $i\in Q^*_k$ such that $u_i(S^*)>u_i(S)$.
If player $i$ switches to party $k$, 
which currently has no members, by submodularity property his
payoff will be at least $u_i(S^*)$, a contradiction with $S$
being a Nash equilibrium state. Therefore, from now on, 
we assume that $|Q_j|<n$ for all $j\in M$.

Now, we have
$$
\tp(S)=\sum_{i\in N} u_i(S) =\sum_{j\in M}\sum_{i\in Q^*_j} u_i(S).
$$ 
For any $j\in M$ and all $i\in Q^*_j$, we can derive a lower bound
on $u_i(S)$. There are two cases to be considered.

\begin{itemize}
\item[(1)] 
If $i\in Q_j$, we have
\begin{eqnarray*}
u_i(S) 
&=& \sum_{Q \subseteq Q_j \setminus \{i\}}
           \frac{|Q|!(|Q_j|-|Q|-1)!}{|Q_j|!}(v_j(Q\cup \{i\}) - v_j(Q))\\  
&>& \sum_{Q \subset Q_j \setminus \{i\}}
           \frac{|Q|!(|Q_j|-|Q|)!}{(|Q_j|+1)!}(v_j(Q\cup \{i\}) - v_j(Q)).
\end{eqnarray*}

\item[(2)]
If $i\notin Q_j$, we have
$$
u_i(S) 
\geq  \sum_{Q \subseteq Q_j }\frac{|Q|!(|Q_j|-|Q|)!}{(|Q_j|+1)!}(v_j(Q\cup \{i\}) - v_j(Q)), 
$$
since $S$ is a Nash equilibrium, and hence player $i$ cannot increase his utility
by switching to party $j$. 
\end{itemize}
Changing the order of summation, by Lemma~\ref{lem_sub}, we have
$$
\tp(S) \geq \sum_{j\in M}\sum_{Q\subseteq Q_j}
            \frac{|Q|!(|Q_j|-|Q|)!}{(|Q_j|+1)!}(v_j(Q\cup Q^*_j) - v_j(Q)).
$$
Set $q=|Q_j|$. We have
$$
\sum_{Q\subseteq Q_j}\frac{|Q|!(|Q_j|-|Q|)!}{(|Q_j|+1)!}=
\sum_{i=0}^q\sum_{Q\subseteq Q_j, |Q|=i}\frac{|Q|!(|Q_j|-|Q|)!}{(|Q_j|+1)!}=
\sum_{i=0}^q{q\choose i}\frac{i!(q-i)!}{(q+1)!}=\sum_{i=0}^q\frac{1}{q+1}=1;
$$ 
this identity can also be derived by considering Shapley values in an additive game
with $|Q_j|+1$ players. Further, we have $v_j(Q\cup Q^*_j) \geq v_j(Q^*_j)$.
Thus,  
\begin{eqnarray}\label{eq:shapley1}
\tp(S)\geq \sum_{j\in M}  v_j(Q^*_j) - 
\sum_{j\in M}\sum_{Q\subseteq Q_j}\frac{|Q|!(|Q_j|-|Q|)!}{(|Q_j|+1)!}v_j(Q).
\end{eqnarray}
For any $Q\subseteq Q_j$, we have $v_j(Q)\leq v_j(Q_j)$, and, moreover, $v_j(\emptyset) = 0$. 
Recall also that we assume that $|Q_j|<n$ for all $j\in M$.
Thus we can bound the negative term in the right-hand side of~\eqref{eq:shapley1} as
\begin{eqnarray}\label{eq:shapley2}
\sum_{j\in M}
\sum_{Q\subseteq Q_j, Q\neq\emptyset}\frac{|Q|!(|Q_j|-|Q|)!}{(|Q_j|+1)!}v_j(Q_j) 
= 
\sum_{j\in M}  (1-\frac{0!(|Q_j|-0)!}{(|Q_j|+1)!})v_j(Q_j)
\leq (1-\frac{1}{n})\tp(S).  
\end{eqnarray}
Combining~\eqref{eq:shapley1} and~\eqref{eq:shapley2}, 
we obtain $(2-1/n) \tp(S) \geq \tp(S^*)$.
\qed

\subsection{Proof of Proposition~\ref{prop:shapley-poa}}
\begin{proof}
Let $v_1$ be an additive function given by
$v_1(\{1\})=\frac{1}{n}$, 
$v_1(\{i\})=\frac{1}{n-1}$ for $i\geq 2$, 
and let $v_2(Q)=1$ for any $Q\neq\emptyset$. 

The state $S^*=(Q^*_1, Q^*_2)$
with $Q^*_1=\{2,\ldots,n\}$, $Q^*_2=\{1\}$
has  total profit
$(n-1)\frac{1}{n-1}+1=2$, which is the optimum in this game.

On the other hand, a state $S=(Q_1, Q_2)$ 
with $Q_1=\{1\}$, $Q_2=\{2,\ldots,n\}$ is a Nash equilibrium. 
Indeed, player $1$ is paid $1/n$ and will be paid the same amount
if he switches parties, so he has no 
incentive to switch. All other players are paid $\frac{1}{n-1}$, 
and any of them will be paid 
the same amount if he switches to the first party. Therefore none of them 
has an incentive to switch either. 
The total profit in state $S$ is $1+\frac{1}{n}$.
There is no Nash equilibrium with a smaller total profit, 
because in any Nash equilibrium state there are 
players in both parties, and hence the total profit is at least $\frac{1}{n}+1$.
Thus, $\poa(\calG)=\frac{2}{1+1/n}=2-\frac{2}{n+1}$.

In any Nash equilibrium, party $2$ contains at least $n-2$ players.
Hence, the total profit in any Nash equilibrium is at most $\frac{2}{n-1}+1$.
This profit is achieved in, e.g., state $S'=(Q_1', Q_2')$ with 
$Q'_1=\{n-1,n\}$, $Q'_2=\{1,\ldots,n-2\}$. 
Therefore, $\pos(\calG)=\frac{2}{1+2/(n-1)}=2-\frac{4}{n+1}$.
\qed
\end{proof}

\end{document}